\begin{document}
\pagestyle{plain}
\title{Downlink and Uplink Energy Minimization Through User Association and Beamforming in Cloud RAN
\footnote{This paper has been presented in part at the IEEE International Conference on Acoustics, Speech, and Signal Processing (ICASSP), Florence, Italy, 4-9 May 2014.
}
\footnote{S. Luo and T. J. Lim are with the Department of
Electrical and Computer Engineering, National University of
Singapore (e-mail:\{shixin.luo, eleltj\}@nus.edu.sg).} \footnote{R.
Zhang is with the Department of Electrical and Computer Engineering,
National University of Singapore (e-mail:elezhang@nus.edu.sg). He is
also with the Institute for Infocomm Research, A*STAR, Singapore.}}

\author{Shixin Luo, Rui
Zhang, and Teng Joon Lim}

\setlength{\textwidth}{7.1in} \setlength{\textheight}{9.7in}
\setlength{\topmargin}{-0.8in} \setlength{\oddsidemargin}{-0.30in}

\maketitle

\begin{abstract}
The cloud radio access network (C-RAN) concept, in which densely deployed access points (APs) are empowered by cloud computing to cooperatively support mobile users (MUs), to improve mobile data rates, has been recently proposed. However, the high density of active APs results in severe interference and also inefficient energy consumption. Moreover, the growing popularity of highly interactive applications with stringent uplink (UL) requirements, e.g. network gaming and real-time broadcasting by wireless users, means that the UL transmission is becoming more crucial and requires special attention. Therefore in this paper, we propose a joint downlink (DL) and UL MU-AP association and beamforming design to coordinate interference in the C-RAN for energy minimization, a problem which is shown to be NP hard. Due to the new consideration of UL transmission, it is shown that the two state-of-the-art approaches for finding computationally efficient solutions of joint MU-AP association and beamforming considering only the DL, i.e., group-sparse optimization and relaxed-integer programming, cannot be modified in a straightforward way to solve our problem. Leveraging on the celebrated UL-DL duality result, we show that by establishing a virtual DL transmission for the original UL transmission, the joint DL and UL optimization problem can be converted to an equivalent DL problem in C-RAN with two inter-related subproblems for the original and virtual DL transmissions, respectively. Based on this transformation, two efficient algorithms for joint DL and UL MU-AP association and beamforming design are proposed, whose performances are evaluated and compared with other benchmarking schemes through extensive simulations.
\end{abstract}

\begin{keywords}
Cloud radio access network, green communication, uplink-downlink duality, group-sparse optimization, relaxed-integer programming, beamforming.
\end{keywords}

\IEEEpeerreviewmaketitle
\setlength{\baselineskip}{1.3\baselineskip}
\newtheorem{definition}{\underline{Definition}}[section]
\newtheorem{fact}{Fact}
\newtheorem{assumption}{Assumption}
\newtheorem{theorem}{\underline{Theorem}}[section]
\newtheorem{lemma}{\underline{Lemma}}[section]
\newtheorem{corollary}{Corollary}
\newtheorem{proposition}{\underline{Proposition}}[section]
\newtheorem{example}{\underline{Example}}[section]
\newtheorem{remark}{\underline{Remark}}[section]
\newtheorem{algorithm}{\underline{Algorithm}}[section]
\newcommand{\mv}[1]{\mbox{\boldmath{$ #1 $}}}

\section{Introduction}\label{sec:introduction}
To meet the fast growing mobile data volume driven by applications such as smartphones and tablets, the traditional wireless network architecture based on a single layer of macro-cells has shifted to one composed of smaller cells such as pico/femto cells with more densely deployed access points (APs). Therefore, cloud radio access network (C-RAN) \cite{CRAN} has recently been proposed and drawn a great deal of attention. In a C-RAN, the distributed APs, also termed remote radio heads (RRHs), are connected to the baseband unit (BBU) pool through high bandwidth backhaul links, e.g. optical transport network \cite{optical}, to enable centralized processing, collaborative transmission, and real-time cloud computing. As a result, significant rate improvement can be achieved due to reduced pathloss along with joint scheduling and signal processing.

{\color{red}However, with densely deployed APs, several new challenges arise in C-RAN. First, close proximity of many active APs results in increased interference, and hence the transmit power of APs and/or mobile users (MUs) needs to be increased to meet any given quality of service (QoS). Second, the amount of energy consumed by a large number of active APs \cite{3GppTR} as well as by the transport network to support high-capacity connections with the BBU pool \cite{backhaul} will also become considerable.} Such facts motivate us to optimize the energy consumption in C-RAN, which is the primary concern of this paper. In particular, both downlink (DL) and uplink (UL) transmissions are considered jointly. The studied C-RAN model consists of densely deployed APs jointly serving a set of distributed MUs, where CoMP based joint transmit/receive processing (beamforming) over all active APs is employed for DL/UL transmissions. Under this setup, we study a joint DL and UL MU-AP association and beamforming design problem to minimize the total energy consumption in the network subject to MUs' given DL and UL QoS requirements. The energy saving is achieved by optimally assigning MUs to be served by the minimal subset of active APs, finding the power levels to transmit at all MUs and APs, and finding the beamforming vectors to use at the multi-antenna APs.

{\color{red}This problem has not been investigated to date, and the closest prior studies are \cite{TLuo13, Letaief13, Liao13, Yu13, Cheng13}. However, the prior studies have all considered MU association and/or active AP selection problems for various objectives from the DL perspective. In particular, the problems studied in \cite{Letaief13, Liao13, Cheng13} can be treated as the DL-only version of our considered joint DL and UL problem, in which the transmit beamforming vectors and the active set of APs are jointly optimized to minimize the power consumption at all the APs. Note that the MU association and/or active AP selection based on DL only may result in inefficient transmit power of MUs or even their infeasible transmit power in the UL considering various possible asymmetries between the DL and UL in terms of channel, traffic and hardware limitation. Furthermore, with users increasingly using applications with high-bandwidth UL requirements, UL transmission is becoming more important. For example, the upload speed required for full high definition (HD) $1080$p Skype video calling is about $20$ Mbps \cite{Skype}. Therefore, we need to account for both DL and UL transmissions while designing the MUs association and active AP selection scheme. The UL-only MU association problem has also been considered extensively in the literature \cite{Hanly95, Rashid98, Hong13}; however, their solutions are not applicable in the context of this work due to their assumption of one-to-one MU-AP association. It is worth noting that the joint MUs association and active AP selection is mathematically analogous to the problem of antenna selection in large multiple-input multiple-output (MIMO) systems \cite{Mehanna13}, which aims to reduce the number of radio transmission chains and hence the energy consumption and signal processing overhead. The connection between these two problems can be recognized by treating the C-RAN as a distributed large MIMO system.}

In terms of other related work, there have been many attempts to optimize the energy consumption in cellular networks, but only over a single dimension each time, e.g. power control \cite{Zander11}, AP ``on/off'' control \cite{Niu10,shixin12,son11}, and coordinated multi-point (CoMP) transmission \cite{Weiyu10, Zluo11}. To avoid an infeasible power allocation, it was suggested in \cite{Zander11} to gradually remove the MUs that cannot be supported due to their limited transmit power budgets. In addition to achieving energy saving from MUs' perspective, \cite{Niu10,shixin12,son11} proposed to switch off the APs that are under light load to save energy by exploiting the fact that the traffic load in cellular networks fluctuates substantially over both space and time due to user mobility and traffic burstiness. Cooperation among different cells or APs could be another possible way to mitigate the interference and achieve energy-efficient communication. For example, if a certain cluster of APs can jointly support all the MUs, the intercell interference can be further reduced especially for the cell-edge MUs \cite{Weiyu10, Zluo11}. A judicious combination of these techniques should provide the best solution, and this is the direction of our work.

Unfortunately, the considered joint DL and UL MU-AP association and beamforming design problem in this paper involves integer programming and {\color{red}is NP hard as shown for a similar problem in \cite[Theorem 1]{Liao13}}. To tackle this difficulty, two different approaches, i.e., group-sparse optimization (GSO) and relaxed-integer programming (RIP), have been adopted in \cite{Letaief13, Liao13} and \cite{Cheng13}, respectively, to solve a similar DL-only problem, where two polynomial-time algorithms were proposed and shown to achieve good performance through simulations. In particular, the GSO approach is motivated by the fact that in the C-RAN with densely deployed APs, only a small fraction of the total number of APs needs to be active for meeting all MUs' QoS. However, due to the new consideration of UL transmission in this paper, we will show that the algorithms proposed in \cite{Letaief13, Liao13, Cheng13} cannot be applied directly to solve our problem, and therefore the methods derived in this paper are important advances in this field.

{\color{red}The contributions of this paper are summarized as follows:
\begin{enumerate}
\item To optimize the energy consumption tradeoffs between the active APs and MUs, we jointly study the DL and UL MU-AP association and beamforming design by solving a weighted sum-power minimization problem. To our best knowledge, this paper is the first attempt to unify the DL and UL MU-AP association and beamforming design into one general framework.
\item Due to a critical scaling issue in the UL receive beamforming design, the GSO based algorithm \cite{Letaief13, Liao13} and the RIP based algorithm \cite{Cheng13} cannot be applied to solve our joint DL and UL design directly. To address this issue, we establish a virtual DL transmission for the original UL transmission in C-RAN by first ignoring the individual (per-AP and per-MU) power constraints based on the celebrated UL-DL duality result \cite{Boche05}. Consequently, the considered joint DL and UL problem without individual power constraints can be transformed into an equivalent DL problem with two inter-related subproblems corresponding to the original and virtual DL transmissions, respectively. With the equivalent DL-only formulation, we extend the GSO based and RIP based algorithms to solve the relaxed joint DL and UL optimization problem.
\item Considering the fact that the optimal solution to the UL sum-power minimization is component-wise minimum, there is no tradeoff among different MUs in terms of power minimization in the UL. Consequently, we are not able to establish the duality result for the per-MU power constraints in the UL, which are thus difficult to incorporate into our developed algorithm. To resolve this issue, we propose a price based iterative method to further optimize the set of active APs while satisfying the per-MU power constraints. Finally, we verify the effectiveness of our proposed algorithms by extensive simulations from three perspectives: ensuring feasibility for both DL and UL transmissions; achieving optimal network energy saving with MU-AP association; and flexibly adjusting various power consumption tradeoffs between active APs and MUs.
\end{enumerate}}

{\color{red}It is worth pointing out that as the baseband processing is migrated to a central unit, i.e., BBU pool, the data exchanged between the APs and the BBU pool includes oversampled real-time digital signals with very high bit rates (in the order of Gbps). As a result, the capacity requirement for the backhaul/fronthaul links becomes far more stringent in the C-RAN. Given finite backhaul capacity, the optimal strategy for backhaul compression and quantization has been studied recently in e.g. \cite{robust, joint, Quek13}. In this paper, however, we focus on addressing the energy consumption (including both transmission and non-transmission related portions) issue in the C-RAN, which is also one of the major concerns for future cellular networks, by assuming that the backhaul transport network is provisioned with sufficiently large capacity. Note that the optical network has been widely accepted as a good option to implement the high-bandwidth backhaul transport network \cite{optical}.}

The rest of this paper is organized as follows. Section \ref{sec:system model} introduces the C-RAN model, and the power consumption models for the APs and MUs. Section \ref{sec:problem formulation} presents our problem formulation, introduces the two existing approaches, namely GSO and RIP, and explain the new challenges in solving the joint DL and UL optimization. Section \ref{sec:joint UL DL} presents our proposed algorithms based on the virtual DL representation of the UL transmission. Section \ref{sec:numerical} shows numerical results. Finally, Section \ref{sec:conclusion} concludes the paper.

{\it Notations}: Boldface letters refer to vectors (lower case) or matrices (upper case). For an arbitrary-size matrix $\mathbf{M}$, $\mathbf{M}^{*}$, $\mathbf{M}^{H}$, and $\mathbf{M}^{T}$ denote the complex conjugate, conjugate transpose and transpose of $\mathbf{M}$, respectively. The distribution of a circularly
symmetric complex Gaussian (CSCG) random vector with mean vector $\mathbf{x}$ and covariance matrix $\boldsymbol\Sigma$ is denoted by
$\mathcal{CN}(\mathbf{x},\boldsymbol\Sigma)$; and $\thicksim$ stands for ``distributed as''. $\mathbb{C}^{x\times y}$ denotes the space of $x\times y$ complex matrices. $\|\mathbf{x}\|$ denotes the Euclidean norm of a complex vector $\mathbf{x}$, and $|z|$ denotes the magnitude of a complex number $z$.

\section{System Model}\label{sec:system model}
We consider a densely deployed C-RAN \cite{CRAN, DenseHetnet} consisting of $N$ access points (APs), denoted by the set $\mathcal{N} = \{1,\cdots,N\}$. The set of distributed APs jointly support $K$ randomly located mobile users (MUs), denoted by the set $\mathcal{K} = \{1,\cdots,K\}$, for both downlink (DL) and uplink (UL) communications. In this paper, for the purpose of exposition, we consider linear precoding and decoding in the DL and UL, respectively, which is jointly designed at the BBU pool assuming the perfect channel knowledge for all MUs. {\color{red}The results in this paper can be readily extended to the case of more complex successive precoding/decoding, e.g. dirty-paper coding (DPC) \cite{dirty} and multiuser detection with successive interference cancelation (SIC) \cite{Tse}, with fixed coding orders among the users.} We also assume that each AP $n$, $n \in \mathcal{N}$, is equipped with $M_n \geq 1$ antennas, and all MUs are each equipped with one antenna. It is further assumed that there exist ideal low-latency backhaul transport links with sufficiently large capacity (e.g. optical fiber) connecting the set of APs to the BBU pool, which performs all the baseband signal processing and transmission scheduling for all APs. The centralized architecture results in efficient coordination of the transmission/reception among all the APs, which can also be opportunistically utilized depending on the traffic demand.

We consider a quasi-static fading environment, and denote the channel vector in the DL from AP $n$ to MU $i$ and that in the UL from MU $i$ to AP $n$ as $\mathbf{h}^{H}_{i,n} \in \mathbb{C}^{1\times M_n}$ and $\mathbf{g}_{i,n} \in \mathbb{C}^{M_n \times 1}$, respectively. Let the vector consisting of the channels from all the APs to MU $i$ and that consisting of the channels from MU $i$ to all the APs be $\mathbf{h}^{H}_{i} = \left[\mathbf{h}^{H}_{i,1}, \cdots ,\mathbf{h}^{H}_{i,N}\right]$ and $\mathbf{g}_{i} = \left[\mathbf{g}^{T}_{i,1}, \cdots ,\mathbf{g}^{T}_{i,N}\right]^{T}$, respectively. There are two main techniques for separating DL and UL transmissions on the same physical transmission medium, i.e., time-division duplex (TDD) and frequency-division duplex (FDD). If TDD is assumed, channel reciprocity is generally assumed to hold between DL and UL transmissions, which means that the channel vector $\mathbf{g}_{i}$ in the UL is merely the transpose of that $\mathbf{h}^{H}_i$ in the DL, i.e., $\mathbf{g}_{i} = \mathbf{h}^{*}_i, \forall i \in \mathcal{K}$. However, if FDD is assumed, $\mathbf{h}_i$'s and $\mathbf{g}_{i}$'s are different in general.

\subsection{DL Transmission}
In DL transmission, the transmitted signal from all APs can be generally expressed as
\begin{align}\label{eq:DL transmit signal from all APs}
\mathbf{x}^{\text{DL}} = \sum^{K}_{i=1}\mathbf{w}^{\text{DL}}_is^{\text{DL}}_i
\end{align}
where $\mathbf{w}^{\text{DL}}_i \in \mathbb{C}^{M\times1}$ is the beamforming vector for all APs to cooperatively send one single stream of data signal $s_i^{\text{DL}}$ to MU $i$, which is assumed to be a complex random variable with zero mean and unit variance. Note that $\sum^{N}_{n=1}M_n = M$. Then, the transmitted signal from AP $n$ can be expressed as
\begin{align}\label{eq:DL transmit signal from AP n}
\mathbf{x}^{\text{DL}}_n = \sum^{K}_{i=1}\mathbf{w}^{\text{DL}}_{i,n}s^{\text{DL}}_i, ~~ n = 1,\cdots,N
\end{align}
where $\mathbf{w}^{\text{DL}}_{i,n} \in \mathbb{C}^{M_n\times1}$ is the $n$th block component of $\mathbf{w}^{\text{DL}}_i$, corresponding to the transmit beamforming vector at AP $n$ for MU $i$. Note that $\mathbf{x}^{\text{DL}} = [\left(\mathbf{x}^{\text{DL}}_1\right)^{T},\cdots,\left(\mathbf{x}^{\text{DL}}_N\right)^{T}]^{T}$ and $\mathbf{w}^{\text{DL}}_i = [\left(\mathbf{w}^{\text{DL}}_{i,1}\right)^{T},\cdots,\left(\mathbf{w}^{\text{DL}}_{i,N}\right)^{T}]^{T}$, $i=1,\cdots,K$. From (\ref{eq:DL transmit signal from AP n}), the transmit power of AP $n$ in DL is obtained as
\begin{align}\label{eq:DL transmit power at AP n}
p^{\text{DL}}_n = \sum^{K}_{i=1}\|\mathbf{w}^{\text{DL}}_{i,n}\|^2, ~~ n = 1,\cdots,N.
\end{align}
We assume that there exists a maximum transmit power constraint for each AP $n$, i.e.,
\begin{align}\label{eq:per AP power constraint}
p^{\text{DL}}_n \leq P^{\text{DL}}_{n,\text{max}}, ~~ n = 1,\cdots,N.
\end{align}

The received signal at the $i$th MU is then expressed as
\begin{align}\label{eq:DL receive signal at MU i}
y_i^{\text{DL}} = \mathbf{h}_i^{H}\mathbf{w}^{\text{DL}}_i s_i^{\text{DL}} + \sum^{K}_{j\neq i}\mathbf{h}_i^{H}\mathbf{w}^{\text{DL}}_j s_j^{\text{DL}} + z^{\text{DL}}_i, ~~ i = 1, \cdots, K
\end{align}
where $z^{\text{DL}}_i$ is the receiver noise at MU $i$, which is assumed to be a circularly symmetric complex Gaussian (CSCG) random variable with zero mean and variance $\sigma^2$, denoted by $z^{\text{DL}}_i \thicksim \mathcal{CN}(0, \sigma^2)$. Treating the interference as noise, the signal-to-interference-plus-noise ratio (SINR) in DL for MU $i$ is given by
\begin{align}\label{eq:DL SINR}
\text{SINR}^{\text{DL}}_i =  \frac{|\mathbf{h}_i^{H}\mathbf{w}^{\text{DL}}_i|^2}{\sum_{j \neq i}|\mathbf{h}_i^{H}\mathbf{w}^{\text{DL}}_j|^2 + \sigma^2}, ~~ i = 1,\cdots, K.
\end{align}

\subsection{UL Transmission}
In UL transmission, the transmitted signal from MU $i$ is given by
\begin{align}\label{eq:UL transmit signal from MU i}
x^{\text{UL}}_i = \sqrt{p^{\text{UL}}_i}s^{\text{UL}}_i, ~~ i =1,\cdots, K
\end{align}
where $p^{\text{UL}}_i$ denotes the transmit power of MU $i$, and $s^{\text{UL}}_i$ is the information bearing signal which is assumed to be a complex random variable with zero mean and unit variance. With $P^{\text{UL}}_{i,\text{max}}$ denoting the transmit power limit for MT $i$, it follows that
\begin{align}
p^{\text{UL}}_i \leq P^{\text{UL}}_{i,\text{max}}, ~~ i =1,\cdots, K.
\end{align}

The received signal at all APs is then expressed as
\begin{align}\label{eq:UL receive signal at all APs}
\mathbf{y}^{\text{UL}} = \sum^{K}_{i=1}\mathbf{g}_i\sqrt{p^{\text{UL}}_i}s^{\text{UL}}_i + \mathbf{z}^{\text{UL}}
\end{align}
where $\mathbf{z}^{\text{UL}} \in \mathbb{C}^{M\times1}$ denotes the receiver noise vector at all APs consisting of independent CSCG random variables each distributed as $\mathcal{CN}\left(0,\sigma^2\right)$. Let $\mathbf{v}^{\text{UL}}_i \in \mathbb{C}^{M\times1}$ denote the receiver beamforming vector used to decode $s_i^{\text{UL}}$ from MU $i$. Then the SINR in UL for MU $i$ after applying $\mathbf{v}^{\text{UL}}_i$ is given by
\begin{align}\label{eq:UL SINR}
\text{SINR}^{\text{UL}}_i = \frac{p^{\text{UL}}_i|(\mathbf{v}^{\text{UL}}_i)^{T}\mathbf{g}_i|^2}{\sum_{j \neq i}p^{\text{UL}}_j|(\mathbf{v}^{\text{UL}}_i)^{T}\mathbf{g}_j|^2 + \sigma^2\|\mathbf{v}^{\text{UL}}_i\|^2}, ~~ i =1,\cdots, K.
\end{align}
Let $\mathbf{v}^{\text{UL}}_{i,n} \in \mathbb{C}^{M_n\times1}$ denote the $n$th block component in $\mathbf{v}^{\text{UL}}_i$, corresponding to the receive beamforming vector at AP $n$ for MU $i$. We thus have $\mathbf{v}^{\text{UL}}_i = [\left(\mathbf{v}^{\text{UL}}_{i,1}\right)^{T},\cdots,\left(\mathbf{v}^{\text{UL}}_{i,N}\right)^{T}]^{T}$, $i=1,\cdots,K$.

\subsection{Energy Consumption Model}
The total energy consumption in the C-RAN comprises of the energy consumed by all APs and all MUs. From (\ref{eq:DL transmit signal from all APs}) and (\ref{eq:UL transmit signal from MU i}), the total transmit power of all APs in DL and that of all MUs in UL can be expressed as $P_t^{\text{DL}} = \sum^{K}_{i=1}\|\mathbf{w}^{\text{DL}}_{i}\|^2$ and $P_t^{\text{UL}} = \sum^{K}_{i=1}p^{\text{UL}}_i$, respectively.

Besides the static power consumption at each AP $n$ due to e.g. real-time A/D and D/A processing, denoted as $P_{s,n}, \forall n \in \mathcal{N}$, in C-RAN with centralized processing, the extensive use of high-capacity backhaul links to connect all APs with the BBU pool makes the power consumption of the transport network no more negligible \cite{backhaul}. For example, consider the passive optical network (PON) to implement the backhaul transport network \cite{optical}. The PON assigns an optical line terminal (OLT) to connect to a set of associated optical network units (ONUs), which coordinate the set of transport links connecting all the APs to the BBU pool, each through a single fiber. For simplicity, the resulting power consumption in the PON can be modeled as \cite{optical}
\begin{align}
P_{\text{PON}} = P_{\text{OLT}} + \sum^{N}_{n=1}P_{\text{ONU},n}
\end{align}
where $P_{\text{OLT}}$ and $P_{\text{ONU},n}$ are both constant and denote the power consumed by the OLT and the transport link associated with AP $n$, respectively.

Moreover, we consider that for energy saving, some APs and their associated transport links can be switched into {\color{red}sleep mode \cite{optical, Blume10} (compared with active mode) with negligible power consumption\footnote{\color{red}It is assumed that when the AP is in the sleep mode, it acts as a passive node and listens to the pilot signals transmitted from the MUs for channel estimation, which consumes negligible power compared with being in the active mode for data transmission. It is further assumed that each AP can switch between the active and sleep modes frequently.}}; thus, the total static power consumption of AP $n$, denoted by $P_{c,n} = P_{s,n} + P_{\text{ONU},n}$, $n \in \mathcal{N}$, can be saved if AP $n$ and its associated transport link are switched into sleep mode for both transmission in DL and UL. For convenience, we express the total static power consumption of all active APs as
\begin{align}\label{eq:total circuit power}
P_c = \sum^{N}_{n=1}\mathbf{1}_n\left(\{\mathbf{w}^{\text{DL}}_{i,n}\},\{\mathbf{v}^{\text{UL}}_{i,n}\}\right)P_{c,n}
\end{align}
where $\mathbf{1}_n\left(\cdot\right)$, $n \in \mathcal{N}$, is an indicator function for AP $n$, which is defined as
\begin{align}\label{eq:on off condition}
\mathbf{1}_n\left(\{\mathbf{w}^{\text{DL}}_{i,n}\},\{\mathbf{v}^{\text{UL}}_{i,n}\}\right) & = \left\{ \begin{array}{cl} \displaystyle
0 & \mbox{if } \mathbf{w}^{\text{DL}}_{i,n} = \mathbf{v}^{\text{UL}}_{i,n} = \mathbf{0}, \forall i \in \mathcal{K} \\
1 & \mbox{otherwise. }
\end{array}\right.
\end{align}
Note that in practical PON systems, the OLT in general cannot be switched into sleep mode as it plays the role of distributor, arbitrator, and aggregator of the transport network, which has a typical fixed power consumption of $P_{\text{OLT}} = 20$W \cite{optical}. We thus ignore $P_{\text{OLT}}$ since it is only a constant. From (\ref{eq:on off condition}), MU $i$ is associated with an active AP $n$ if its corresponding transmit and/or receive beamforming vector at AP $n$ is nonzero, i.e., $\mathbf{w}^{\text{DL}}_{i,n} \neq \mathbf{0}$ and/or $\mathbf{v}^{\text{UL}}_{i,n} \neq \mathbf{0}$. Under this setup, it is worth pointing out that each MU $i$ is allowed to connect with two different sets of APs for DL and UL transmissions, respectively, e.g.  $\mathbf{w}^{\text{DL}}_{i,n} \neq \mathbf{0}$ but $\mathbf{v}^{\text{UL}}_{i,n} = \mathbf{0}$ for some $n \in \mathcal{N}$, which is promising to be implemented in next generation cellular networks \cite{Sachs13}. Furthermore, from (\ref{eq:on off condition}), AP $n$ could be switched into sleep mode only if it does not serve any MU.

We aim to minimize the total energy consumption in the C-RAN, including that due to transmit power of all MUs (but ignoring any static power consumption of MU terminals) as well as that due to transmit power and static power of all active APs. Therefore, we consider the following weighted sum-power as our design metric:
\begin{align}\label{eq:weighted sum power}
P_{\text{total}}\left(\{\mathbf{w}^{\text{DL}}_{i}\},\{\mathbf{v}^{\text{UL}}_{i}\}\right) & = \left( \sum^{N}_{n=1}\mathbf{1}_n\left(\{\mathbf{w}^{\text{DL}}_{i,n}\},\{\mathbf{v}^{\text{UL}}_{i,n}\}\right)P_{c,n}\right. \nonumber \\ &\left.+\sum^{K}_{i=1}\|\mathbf{w}^{\text{DL}}_{i}\|^2\right)+\lambda\left(\sum^{K}_{i=1}p^{\text{UL}}_i\right)
\end{align}
where $\lambda \geq 0$ is a weight to trade off between the total energy consumptions between all the active APs and all MUs.

\section{Problem Formulation and Two Solution Approaches}\label{sec:problem formulation}
To minimize the weighted power consumption in (\ref{eq:weighted sum power}), we jointly optimize the DL and UL MU-AP association and transmit/receive beamforming by considering the following problem.
\begin{align}
(\text{P1}):\mathop{\mathtt{Min.}}\limits_{\{\mathbf{w}^{\text{DL}}_i\},\{\mathbf{v}^{\text{UL}}_i\},\{p^{\text{UL}}_i\}} &
~~  P_{\text{total}}\left(\{\mathbf{w}^{\text{DL}}_{i}\},\{\mathbf{v}^{\text{UL}}_{i}\}\right) \label{eq:ULDL objective function}\\
\mathtt{s.t.}
& ~~ \text{SINR}^{\text{DL}}_i \geq \gamma^{\text{DL}}_i, \forall i \in \mathcal{K} \label{eq:ULDL DL SINR constraint}\\
& ~~ \text{SINR}^{\text{UL}}_i \geq \gamma^{\text{UL}}_i, \forall i \in \mathcal{K} \label{eq:ULDL UL SINR constraint}\\
& ~~ p^{\text{DL}}_n \leq P^{\text{DL}}_{n,\text{max}}, ~~ \forall n \in \mathcal{N} \label{eq:ULDL DL power constraint} \\
& ~~ 0 \leq p^{\text{UL}}_i \leq P^{\text{UL}}_{i,\text{max}}, \forall i \in \mathcal{K} \label{eq:ULDL UL power constraint}
\end{align}
where $\gamma^{\text{DL}}_i$ and $\gamma^{\text{UL}}_i$ are the given SINR requirements of MU $i$ for the DL and UL transmissions, respectively. In the rest of this paper, the constraints in (\ref{eq:ULDL DL power constraint}) and (\ref{eq:ULDL UL power constraint}) are termed per-AP and per-MU power constraints, respectively. Problem (P1) can be shown to be non-convex due to the implicit integer programming involved due to indicator function $\mathbf{1}_n\left(\cdot\right)$'s in the objective. Prior to solving problem (P1), we first need to check its feasibility. Since the DL and UL transmissions are coupled only by the objective function in (\ref{eq:ULDL objective function}), the feasibility of problem (P1) can be checked by considering two separate feasibility problems: one for the DL and the other for the UL, which have both been well studied in the literature \cite{Liu12} and thus the details are omitted here for brevity. For the rest of this paper, we assume that problem (P1) is always feasible if all APs are active.

As mentioned in Section \ref{sec:introduction}, the problem of joint MU-AP association and transmit beamforming subject to MUs' QoS and per-AP power constraints for power minimization in the DL-only transmission has been recently studied in \cite{Letaief13, Liao13, Cheng13} using the approaches of GSO and RIP, respectively, where two different polynomial-time algorithms were proposed and shown to both achieve good performance by simulations. In contrast, problem (P1) in this paper considers both DL and UL transmissions to address possible asymmetries between the DL and UL in terms of channel realization, traffic load and hardware limitation. Furthermore, considering that MUs are usually powered by finite-capacity batteries as compared to APs that are in general powered by the electricity grid, we study the power consumption tradeoffs between APs and MUs by minimizing the weighted sum-power $P_{\text{total}}\left(\{\mathbf{w}^{\text{DL}}_{i}\},\{\mathbf{v}^{\text{UL}}_{i}\}\right)$ in (P1). Therefore, the problems considered in \cite{Letaief13, Liao13, Cheng13} can be treated as special cases of (P1).

In the following, we show that due to the new consideration of UL transmission, the algorithms proposed in \cite{Letaief13, Liao13, Cheng13} based on GSO and RIP for solving the DL optimization cannot be applied directly to solve (P1), which thus motivates us to find a new method to resolve this issue in Section \ref{sec:joint UL DL}.

\subsection{GSO based Solution}
Given the fact that the static power, i.e., $P_{c,n}$, is in practice significantly larger than the transmit power at each AP $n$, to minimize the total network energy consumption \cite{optical, wireless_network_power}, it is conceivable that for the optimal solution of (P1) only a subset of $N$ APs should be active. As a result, a ``group-sparse'' property can be inferred from the following concatenated beamforming vector:
\begin{align}\label{eq:group sparsity}
\left[[\hat{\mathbf{w}}^{\text{DL}}_1,\hat{\mathbf{v}}^{\text{UL}}_1],\cdots,[\hat{\mathbf{w}}^{\text{DL}}_N,\hat{\mathbf{v}}^{\text{UL}}_N]\right]
\end{align}
in which $\hat{\mathbf{w}}^{\text{DL}}_n = \left[(\mathbf{w}^{\text{DL}}_{1,n})^T,\cdots,(\mathbf{w}^{\text{DL}}_{K,n})^T\right]$ and $\hat{\mathbf{v}}^{\text{UL}}_n = \left[(\mathbf{v}^{\text{UL}}_{1,n})^T,\cdots,(\mathbf{v}^{\text{UL}}_{K,n})^T\right]$, $n = 1,\cdots,N$, i.e., the beamforming vectors are grouped according to their associated APs . If AP $n$ is in the sleep mode, its corresponding block $[\hat{\mathbf{w}}^{\text{DL}}_n,\hat{\mathbf{v}}^{\text{UL}}_n]$ in (\ref{eq:group sparsity}) needs to be zero. Consequently, the fact that a small subset of deployed APs is selected to be active implies that the concatenated beamforming vector in (\ref{eq:group sparsity}) should contain only a very few non-zero block components.

One well-known approach to enforce desired group sparsity in the obtained solutions for optimization problems is by adding to the objective function an appropriate penalty term. The widely used group sparsity enforcing penalty function, which was first introduced in the context of the group least-absolute selection and shrinkage operator (LASSO) problem \cite{Yuan06}, is the mixed $\ell_{1,2}$ norm. In our case, such a penalty is expressed as
\begin{align}\label{eq:ULDL penalty}
\sum^{N}_{n=1}\left\|[\hat{\mathbf{w}}^{\text{DL}}_n,\hat{\mathbf{v}}^{\text{UL}}_n]\right\|.
\end{align}
The $\ell_{1,2}$ norm in (\ref{eq:ULDL penalty}), similar to $\ell_{1}$ norm, offers the closest convex approximation to the $\ell_{0}$ norm over the vector consisting of $\ell_{2}$ norms $\left\{\left\|[\hat{\mathbf{w}}^{\text{DL}}_n,\hat{\mathbf{v}}^{\text{UL}}_n]\right\|\right\}^{N}_{n=1}$, implying that each $\left\|[\hat{\mathbf{w}}^{\text{DL}}_n,\hat{\mathbf{v}}^{\text{UL}}_n]\right\|$ is desired to be set to zero to obtain group sparsity.

More generally, the mixed $\ell_{1,p}$ norm has also been shown to be able to recover group sparsity with $p > 1$ \cite{GroupSparse}, among which the $\ell_{1,\infty}$ norm, defined as
\begin{align}\label{eq:infinity}
\sum^{N}_{n=1}\max\left(\max\limits_{i,j}\left|w^{\text{DL}}_{i,n}(j)\right|, \max\limits_{i,j}\left|v^{\text{UL}}_{i,n}(j)\right|\right)
\end{align}
has been widely used \cite{Mehanna13}. Compared with $\ell_{1,2}$ norm, $\ell_{1,\infty}$ norm has the potential to obtain more sparse solution but may lead to undesired solution with components of equal magnitude. In this paper, we focus on the $\ell_{1,2}$ norm in (\ref{eq:ULDL penalty}) for our study. We will compare the performance of $\ell_{1,2}$ and $\ell_{1,\infty}$ norms by simulations in Section \ref{sec:numerical}.

According to \cite{Letaief13, Liao13, TLuo13}, at first glance it seems that using the $\ell_{1,2}$ norm, problem (P1) can be approximately solved by replacing the objective function with
\begin{align}\label{eq:scaling problem objective}
\sum^{N}_{n=1}\beta_n\sqrt{\sum^{K}_{i=1}\|\mathbf{w}^{\text{DL}}_{i,n}\|^2+\|\mathbf{v}^{\text{UL}}_{i,n}\|^2} + \sum_{i=1}^{K}\|\mathbf{w}^{\text{DL}}_i\|^2 + \lambda\sum_{i=1}^{K}p^{\text{UL}}_i
\end{align}
where $\sum^{N}_{n=1}\beta_n\sqrt{\sum^{K}_{i=1}\|\mathbf{w}^{\text{DL}}_{i,n}\|^2+\|\mathbf{v}^{\text{UL}}_{i,n}\|^2}$ can be treated as a convex relaxation of the indicator functions in (\ref{eq:weighted sum power}), and $\beta_n \geq 0$ indicates the relative importance of the penalty term associated with AP $n$. However, problem (P1) with (\ref{eq:scaling problem objective}) as the objective function is still non-convex due to the constraints in (\ref{eq:ULDL DL SINR constraint}) and (\ref{eq:ULDL UL SINR constraint}). Furthermore, since the UL receive beamforming vector $\mathbf{v}^{\text{UL}}_i$'s can be scaled down to be arbitrarily small without affecting the UL SINR defined in (\ref{eq:UL SINR}), minimizing (\ref{eq:scaling problem objective}) directly will result in all $\mathbf{v}^{\text{UL}}_i$'s going to zero. To be more specific, let $\hat{\mathbf{w}}^{\text{DL}}_i$ and $\hat{\mathbf{v}}^{\text{UL}}_i$ denote the optimal solution of problem (P1) with (\ref{eq:scaling problem objective}) as the objective function. Then, it follows that
\begin{align}
\hat{\mathbf{v}}^{\text{UL}}_i \approx \mathbf{0}, ~~ \forall i \in \mathcal{K}
\end{align}
and $\hat{\mathbf{w}}^{\text{DL}}_i, \forall i \in \mathcal{K}$, preserves the ``group-sparse'' property where the non-zero block components correspond to the active APs. Two issues thus arise: first, the UL does not contribute to the selection of active APs; second, the set of selected active APs based on the DL only cannot guarantee the QoS requirements for the UL. As a result, the $\ell_{1,2}$ norm penalty term in (\ref{eq:scaling problem objective}) or more generally the $\ell_{1,p}$ norm penalty does not work for the joint DL and UL AP selection in our problem, and hence the algorithm proposed in \cite{Letaief13,Liao13,TLuo13}, which involves only the DL transmit beamforming vector $\mathbf{w}^{\text{DL}}_i$'s, cannot be modified in a straightforward way to solve our problem.

\subsection{RIP based Solution}
Next, we reformulate problem (P1) by introducing a set of binary variable $\rho_n$'s indicating the ``active/sleep'' state of each AP as follows.
\begin{align}
(\text{P2}):\mathop{\mathtt{Min.}}\limits_{\{\mathbf{w}^{\text{DL}}_i\},\{\mathbf{v}^{\text{UL}}_i\},\{p^{\text{UL}}_i\},\{\rho_n\}} &
~~ \left(\sum^{N}_{n = 1}\rho_nP_{c,n}+\sum^{K}_{i=1}\|\mathbf{w}^{\text{DL}}_{i}\|^2\right)+\lambda\left(\sum^{K}_{i=1}p^{\text{UL}}_i\right) \label{eq:ULDL objective function integer}\\
\mathtt{s.t.}
& ~~ (\ref{eq:ULDL DL SINR constraint}), (\ref{eq:ULDL UL SINR constraint}), (\ref{eq:ULDL DL power constraint}), \text{and} ~ (\ref{eq:ULDL UL power constraint}) \\
& ~~ \sum^{K}_{i=1}\|\mathbf{w}^{\text{DL}}_{i,n}\|^2+\|\mathbf{v}^{\text{UL}}_{i,n}\|^2 \leq \rho_n(P^{\text{DL}}_{n,\text{max}}+\eta), \forall n \in \mathcal{N} \label{eq:bigM onoff}\\
& ~~ \rho_n \in \{0,1\}, \forall n \in \mathcal{N}
\end{align}
where $\eta > 0$ is a constant with arbitrary value. Note that the active-sleep constraints in (\ref{eq:bigM onoff}) are inspired by the well-known big-$M$ method \cite{Integer}: if $\rho_n = 0$, the constraint (\ref{eq:bigM onoff}) ensures that $\mathbf{w}^{\text{DL}}_{i,n} = \mathbf{v}^{\text{UL}}_{i,n} = \mathbf{0}, \forall i \in \mathcal{K}$; if $\rho_n = 1$, the constraint has no effect on $\mathbf{w}^{\text{DL}}_{i,n}$ and $\mathbf{v}^{\text{UL}}_{i,n}, \forall i \in \mathcal{K}$, as $P^{\text{DL}}_{n,\text{max}}+\eta$ represents an upper bound on the term $\sum^{K}_{i=1}\|\mathbf{w}^{\text{DL}}_{i,n}\|^2+\|\mathbf{v}^{\text{UL}}_{i,n}\|^2$. Notice that $\eta$ can be chosen arbitrarily due to the scaling invariant property of UL receive beamforming vector $\mathbf{v}^{\text{UL}}_i$'s. With the active-sleep constraints in (\ref{eq:bigM onoff}), the equivalence between problems (P1) and (P2) can be easily verified.

In \cite{Cheng13}, a similar problem to (P2) was studied corresponding to the case with only DL transmission. For problem (P2) without $\mathbf{v}^{\text{UL}}_i$ and $p^{\text{UL}}_i, \forall i \in \mathcal{K}$ and their corresponding constraints, the problem can be transformed to a convex second-order cone programming (SOCP) by relaxing the binary variable $\rho_n$ as $\rho_n \in [0, 1], \forall n \in \mathcal{N}$. Under this convex relaxation, a BnC algorithm, which is a combination of the branch-and-bound (BnB) and the cutting plane (CP) methods \cite{Integer}, was proposed in \cite{Cheng13} to solve the DL problem optimally. However, the computational complexity of BnB is prohibitive for large networks in practice, which grows exponentially with the number of APs. To obtain polynomial-time algorithm with near-optimal performance, in \cite{Cheng13}, the authors further proposed an incentive measure based heuristic algorithm to determine the set of active APs. The incentive measure reflects the importance of each AP to the whole network and is defined as the ratio of the total power received at all MUs to the total power expended for each AP.

However, with both DL and UL transmissions, it is observed that problem (P2) can no longer be transformed to a convex form by relaxing $\rho_n$'s as continuous variables due to the constraints in (\ref{eq:ULDL UL SINR constraint}). Furthermore, because of the scaling invariant property of UL receive beamforming vectors, solving the relaxed problem of (P2) will result in all $\mathbf{v}^{\text{UL}}_i$'s going to zero, similar to the case of GSO based solution. Particularly, the value of the relaxed indicator $\rho_n, \forall n \in \mathcal{N}$, will not be related to $\mathbf{v}^{\text{UL}}_i$'s, which in fact contributes to the penalty incurred in the objective due to the static power of AP $n$, i.e., $\rho_nP_{c,n}$. Finally, it is nontrivial to find an incentive measure that reflects the importance of each AP to both DL and UL transmissions.

\section{Proposed Solution}\label{sec:joint UL DL}
In this section, we provide two efficient algorithms to approximately solve problem (P1) based on the GSO and RIP approaches, respectively.

\subsection{Proposed Algorithm for (P1) based on GSO}\label{sec:GSO}
First, we consider the approach of GSO and present a new method to address the joint DL and UL optimization. To obtain an efficient solution for problem (P1), we first assume that all the APs and MUs have infinite power budget, i.e., $P^{\text{DL}}_{n,\text{max}}=+\infty$, $\forall n \in \mathcal{N}$ and $P^{\text{UL}}_{i,\text{max}} = +\infty, \forall i \in \mathcal{K}$. The resulting problem is termed (P1-1). An equivalent reformulation of problem (P1-1) is then provided to overcome the receive beamforming scaling issue mentioned in Section \ref{sec:problem formulation}. Then, we discuss the challenges of dealing with finite per-AP and per-MU power constraints and provide efficient methods to handle them.

\subsubsection{Solution for problem (P1-1)}\label{sec:without}
First, we consider the following transmit sum-power minimization problem in the UL:

\begin{small}
\vspace{-0.2in}
\begin{align}\label{eq:UL power minimization}
\mathop{\mathtt{Min.}}\limits_{\{\mathbf{v}^{\text{UL}}_i\},\{p^{\text{UL}}_i\}} &
~~ \sum_{i=1}^{K}p^{\text{UL}}_i \nonumber \\
\mathtt{s.t.}
& ~~ \text{SINR}^{\text{UL}}_i \geq \gamma^{\text{UL}}_i, \forall i \in \mathcal{K} \nonumber \\
& ~~ p^{\text{UL}}_i \geq 0, \forall i \in \mathcal{K}.
\end{align}
\end{small}%
From \cite{Boche05}, it follows that Problem (\ref{eq:UL power minimization}) can be solved in a virtual DL channel as
\begin{small}
\begin{align}\label{eq:virtual DL power minimizatin}
\mathop{\mathtt{Min.}}\limits_{\{\mathbf{w}^{\text{VDL}}_i\}} &
~~ \sum_{i=1}^{K}\|\mathbf{w}^{\text{VDL}}_i\|^2 \nonumber \\
\mathtt{s.t.}
& ~~ \text{SINR}^{\text{VDL}}_i \triangleq \frac{|\mathbf{g}_i^{H}\mathbf{w}^{\text{VDL}}_i|^2}{\sum_{j \neq i}|\mathbf{g}_i^{H}\mathbf{w}^{\text{VDL}}_j|^2 + \sigma^2} \geq \gamma^{\text{UL}}_i, \forall i \in \mathcal{K}
\end{align}
\end{small}%
where $\mathbf{w}^{\text{VDL}}_i \in \mathbb{C}^{M\times1}$ is the virtual DL transmit beamforming vector over $N$ APs for MU $i$. Denote $(\mathbf{v}^{\text{UL}}_i)^{'}$, $(p^{\text{UL}}_i)^{'}$ and $(\mathbf{w}^{\text{VDL}}_i)^{'}$, $i = 1,\cdots,K$ as the optimal solutions to problems (\ref{eq:UL power minimization}) and (\ref{eq:virtual DL power minimizatin}), respectively. Then from \cite{Boche05} it follows that $(\mathbf{v}^{\text{UL}}_i)^{'}$ and $(\mathbf{w}^{\text{VDL}}_i)^{'}$ can be set to be identical, $i = 1,\cdots,N$, and furthermore $\sum_{i=1}^{K}(p^{\text{UL}}_i)^{'} = \sum_{i=1}^{K}\|(\mathbf{w}^{\text{VDL}}_i)^{'}\|^2$.

By establishing a virtual DL transmission for the UL transmission based on the above UL-DL duality, we have the following lemma.
\begin{lemma}\label{lemma:1}
Problem (P1-1) is equivalent to the following problem.
\begin{align}
(\text{P3}):~\mathop{\mathtt{Min.}}\limits_{\{\mathbf{w}^{\text{DL}}_i\},\{\mathbf{w}^{\text{VDL}}_i\}}&
~~ \sum^{N}_{n=1}\mathbf{1}_n\left(\{\mathbf{w}^{\text{DL}}_{i,n}\},\{\mathbf{w}^{\text{VDL}}_{i,n}\}\right)P_{c,n} \nonumber \\ &+\sum^{K}_{i=1}\|\mathbf{w}^{\text{DL}}_{i}\|^2+\lambda\sum_{i=1}^{K}\|\mathbf{w}^{\text{VDL}}_i\|^2\\
\mathtt{s.t.}
& ~~  \text{SINR}^{\text{DL}}_i \geq \gamma^{\text{DL}}_i, \forall i \in \mathcal{K} \label{eq:DL SINR r}\\
& ~~ \text{SINR}^{\text{VDL}}_i \geq \gamma^{\text{UL}}_i, \forall i \in \mathcal{K}. \label{eq:VDL SINR r}
\end{align}
\end{lemma}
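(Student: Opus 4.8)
\emph{Proof plan.} The plan is to build an explicit, objective-preserving bijection between the feasible sets of (P1-1) and (P3) using the UL-DL duality recalled above. The starting observation is that the two problems agree completely on their downlink part: the variables $\{\mathbf{w}^{\text{DL}}_i\}$, the constraints (\ref{eq:DL SINR r}), and the term $\sum_{i}\|\mathbf{w}^{\text{DL}}_i\|^2$ are literally the same, and the downlink and uplink (respectively virtual-downlink) quantities interact only through the static-power term, since $\mathbf{1}_n(\cdot)$ activates AP $n$ exactly when some block of the DL or the UL/VDL beamformer at AP $n$ is nonzero. Thus it is enough to match $\lambda\sum_i p^{\text{UL}}_i$ with $\lambda\sum_i\|\mathbf{w}^{\text{VDL}}_i\|^2$ together with the block-sparsity patterns of $\{\mathbf{v}^{\text{UL}}_{i,n}\}$ and $\{\mathbf{w}^{\text{VDL}}_{i,n}\}$, while leaving the DL part untouched.

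Next I would construct the map from (P3) to (P1-1). Each constraint (\ref{eq:VDL SINR r}) forces $\mathbf{w}^{\text{VDL}}_i\neq\mathbf{0}$ (as $\gamma^{\text{UL}}_i>0$), so write $\mathbf{w}^{\text{VDL}}_i=\sqrt{q_i}\,\mathbf{u}_i$ with $\mathbf{u}_i$ a unit-norm direction and $q_i=\|\mathbf{w}^{\text{VDL}}_i\|^2$. Taking $\mathbf{v}^{\text{UL}}_i=\mathbf{u}_i$ and applying the pointwise form of the UL-DL duality \cite{Boche05} for fixed beamforming directions, I obtain uplink powers $\{p^{\text{UL}}_i\}$ that reproduce the very same SINRs achieved by $\{\mathbf{w}^{\text{VDL}}_i\}$ in the virtual downlink, hence satisfy (\ref{eq:ULDL UL SINR constraint}), and that preserve the sum power, $\sum_i p^{\text{UL}}_i=\sum_i q_i=\sum_i\|\mathbf{w}^{\text{VDL}}_i\|^2$. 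Because $\mathbf{v}^{\text{UL}}_i$ and $\mathbf{w}^{\text{VDL}}_i$ are parallel, they share the same zero/nonzero block pattern, so $\mathbf{1}_n(\{\mathbf{w}^{\text{DL}}_{i,n}\},\{\mathbf{v}^{\text{UL}}_{i,n}\})=\mathbf{1}_n(\{\mathbf{w}^{\text{DL}}_{i,n}\},\{\mathbf{w}^{\text{VDL}}_{i,n}\})$ for every $n$. Keeping $\{\mathbf{w}^{\text{DL}}_i\}$ fixed thus yields a feasible point of (P1-1) whose objective equals that of the original (P3) point. The reverse map is symmetric: from a feasible point of (P1-1) I normalize each $\mathbf{v}^{\text{UL}}_i$, use the duality in the opposite direction to obtain virtual-downlink powers with the same sum, and rescale to define $\mathbf{w}^{\text{VDL}}_i$, again preserving the block-sparsity pattern and the objective. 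Since both maps send feasible points to feasible points with identical objective value, (P1-1) and (P3) share the same optimal value and their optimal solutions correspond.

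The step I expect to be the crux --- and the reason the virtual-downlink reformulation is needed in the first place --- is the handling of the scaling degree of freedom in $\mathbf{v}^{\text{UL}}_i$ noted in Section \ref{sec:problem formulation}. Since the SINR in (\ref{eq:UL SINR}) is invariant to scaling of $\mathbf{v}^{\text{UL}}_i$, the receive beamformer carries no power of its own, so the correspondence cannot identify $\mathbf{v}^{\text{UL}}_i$ with $\mathbf{w}^{\text{VDL}}_i$ as ``the same vector''; it must route through the common direction $\mathbf{u}_i$, with the actual power living in $p^{\text{UL}}_i$ on one side and in $\|\mathbf{w}^{\text{VDL}}_i\|$ on the other. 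The delicate point is that this matching must leave the AP-activity pattern intact so that the static-power term transfers unchanged, which is exactly why one aligns the two beamformers in direction rather than merely in norm. A secondary item to check is the pointwise duality itself, namely that for fixed directions the achievable UL and VDL SINR regions coincide with equal total power; this is the content of \cite{Boche05} specialized to a given set of beamforming directions, of which the sum-power identity $\sum_i (p^{\text{UL}}_i)'=\sum_i\|(\mathbf{w}^{\text{VDL}}_i)'\|^2$ stated above is the fully optimized instance.
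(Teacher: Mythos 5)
Your proof is correct and takes essentially the same route as the paper's: the paper's (much terser) argument likewise establishes the equivalence by mapping any feasible solution of (P3) to a feasible solution of (P1-1) with the same objective value and vice versa, invoking the UL--DL duality of \cite{Boche05}. Your explicit construction---aligning $\mathbf{v}^{\text{UL}}_i$ with the direction of $\mathbf{w}^{\text{VDL}}_i$ so that the fixed-direction duality transfers the SINRs and the sum power while the block-sparsity pattern (hence the static-power term $\mathbf{1}_n(\cdot)P_{c,n}$) is preserved---is precisely the detail the paper leaves implicit.
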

\begin{proof}
{\color{red}For any given feasible solution to problem (P3), we can always find a corresponding feasible solution to problem (P1-1) achieving the same objective value as that of problem (P3), and vice versa, similar as \cite[Proposition 1]{RuiIt}; thus, problems (P1-1) and (P3) achieve the same optimal value with the same set of optimal DL/UL beamforming vectors. Lemma \ref{lemma:1} is thus proved.}
\end{proof}

Since problem (P3) is a DL-only problem that has the same ``group-sparse'' property as (P1-1), it can be approximately solved by replacing the objective function with
\begin{align}\label{eq:reformulate problem objective}
\sum^{N}_{n=1}\beta_n\sqrt{\sum^{K}_{i=1}\|\mathbf{w}^{\text{DL}}_{i,n}\|^2+\|\mathbf{w}^{\text{VDL}}_{i,n}\|^2} + \sum_{i=1}^{K}\|\mathbf{w}^{\text{DL}}_i\|^2 + \lambda\sum_{i=1}^{K}\|\mathbf{w}^{\text{VDL}}_i\|^2.
\end{align}
Comparing (\ref{eq:reformulate problem objective}) and (\ref{eq:scaling problem objective}), we have successfully solved the scaling issue of UL receive beamforming vector, $\mathbf{v}^{\text{UL}}_i$'s, by replacing them with the equivalent DL transmit beamforming vector, $\mathbf{w}^{\text{VDL}}_i$'s, since from (\ref{eq:virtual DL power minimizatin}) it follows that the virtual DL SINR of each MU $i$ is no more scaling invariant to $\mathbf{w}^{\text{VDL}}_i$'s.

Furthermore, since any arbitrary phase rotation of the beamforming vectors does not affect both (\ref{eq:reformulate problem objective}) and the SINR constrains in (\ref{eq:DL SINR r}) and (\ref{eq:VDL SINR r}), (P3) with (\ref{eq:reformulate problem objective}) as the objective function can be reformulated as a convex SOCP \cite{Boydbook}, which is given by

\begin{small}
\vspace{-0.1in}
\begin{align}
&(\text{P4}): \nonumber \\~&\mathop{\mathtt{Min.}}\limits_{\{\mathbf{w}^{\text{DL}}_i\},\{\mathbf{w}^{\text{VDL}}_i\},\{t_n\}}
~~ \sum^{N}_{n=1}\beta_nt_n + \sum_{i=1}^{K}\|\mathbf{w}^{\text{DL}}_i\|^2 + \lambda\sum_{i=1}^{K}\|\mathbf{w}^{\text{VDL}}_i\|^2\\
\mathtt{s.t.}
& ~~ \left\|\begin{array}{cl} \displaystyle
      \mathbf{h}_i^{H}\mathbf{W}^{\text{DL}} \\
        \sigma \end{array}\right\| \leq \sqrt{1+\frac{1}{\gamma^{\text{DL}}_i}}\mathbf{h}_i^{H}\mathbf{w}^{\text{DL}}_i, \forall i \in \mathcal{K} \label{eq:DL-SOCP SINR constraint}\\
& ~~ \left\|\begin{array}{cl} \displaystyle
      \mathbf{g}_i^{H}\mathbf{W}^{\text{VDL}} \\
        \sigma \end{array}\right\| \leq \sqrt{1+\frac{1}{\gamma^{\text{UL}}_i}}\mathbf{g}_i^{H}\mathbf{w}^{\text{VDL}}_i, \forall i \in \mathcal{K} \label{eq:VDL-SOCP SINR constraint}\\
& ~~ \sqrt{\sum^{K}_{i=1}\|\mathbf{w}^{\text{DL}}_{i,n}\|^2+\|\mathbf{w}^{\text{VDL}}_{i,n}\|^2} \leq t_n, \forall n \in \mathcal{N} \label{eq:slack}
\end{align}
\end{small}%
where $\mathbf{W}^{\text{DL}} = [\mathbf{w}^{\text{DL}}_1, \cdots, \mathbf{w}^{\text{DL}}_K]$, $\mathbf{W}^{\text{VDL}} = [\mathbf{w}^{\text{VDL}}_1, \cdots, \mathbf{w}^{\text{VDL}}_K]$, and $t_n$'s are auxiliary variables with $t_n=0$ and $t_n>0$ indicating that AP $n$ is in active and sleep mode, respectively. {\color{red}Notice that without $\ell_{1,2}$ norm penalty or $\beta_n = 0$, $\forall n \in \mathcal{N}$, problem (P4) can be decomposed into two separate minimum-power beamforming design problems: one for the original DL transmission, and the other for the virtual DL transmission.}
\begin{remark}
{\color{red}Conventionally, the UL transmit sum-power minimization problem, as in (\ref{eq:UL power minimization}), has a convenient analytical structure and thus is computationally easier to handle, as compared to the DL minimum-power beamforming design problem, as in (\ref{eq:virtual DL power minimizatin}). Consequently, most existing studies in the literature have transformed the DL problem to its virtual UL formulation for convenience. The motivation of exploiting the reverse direction in this work, however, is to overcome the scaling issue of UL receive beamforming in GSO, so that we can solve the AP selection problem jointly for both DL and UL transmissions.}
\end{remark}

Next, we present the complete algorithm for problem (P1-1) based on GSO, in which three steps need to be performed sequentially.
\begin{enumerate}
\item \emph{Identify the subset of active APs denoted as $\mathcal{N}_{\text{on}}$}. This can be done by iteratively solving problem (P4) with different $\beta_n$'s. Notice that how to set the parameter $\beta_n$'s in (P4) plays a key role in the resulting APs selection. To optimally set the values of $\beta_n$'s, we adopt an iterative method similar as in \cite{Boyd}, shown as follows. In the $l$th iteration, $l \geq 1$, $t_n^{(l)}$'s are obtained by solving Problem (P4) with $\beta_n = \beta_n^{(l)}, \forall n \in \mathcal{N}$. The $\beta_n^{(l)}$'s are derived from the solution $t_n^{(l-1)}$'s of the $(l-1)$th iteration as
    \begin{align}\label{eq:iterative update}
    \beta^{(l)}_n = \frac{P_{c,n}}{t^{(l-1)}_n+\varepsilon}, n = 1,\cdots,N
    \end{align}
    where $\varepsilon$ is a small positive number to ensure stability. Notice that the initial values of $t^{(0)}_n$'s are chosen as
    \begin{align}
    t^{(0)}_n = \sqrt{\sum^{K}_{i=1}\|\tilde{\mathbf{w}}^{\text{DL}}_{i,n}\|^2+\|\tilde{\mathbf{w}}^{\text{VDL}}_{i,n}\|^2}, n = 1,\cdots,N
    \end{align}
    where $\tilde{\mathbf{w}}^{\text{DL}}_{i,n}$ and $\tilde{\mathbf{w}}^{\text{VDL}}_{i,n}$ are the beamforming vector solution of Problem (P4) with $\beta_n = 0, \forall n \in \mathcal{N}$. The above update is repeated until $|\beta^{(l)}_n - \beta^{(l-1)}_n| < \eta$, $\forall n \in \mathcal{N}$, where $\eta$ is a small positive constant that controls the algorithm accuracy.

    Let $\mathbf{t}^{\star} = [t_1^{\star},\cdots,t_N^{\star}]$ denote the sparse solution after the convergence of the above iterative algorithm. Then the nonzero entries in $\mathbf{t}^{\star}$ correspond to the APs that need to be active, i.e., $\mathcal{N}_{\text{on}}= \left\{n|t^{\star}_n>0, n \in \mathcal{N}\right\}$.
\item \emph{Obtain the optimal transmit/receive beamforming vectors $\left(\mathbf{w}^{\text{DL}}_i\right)^{\star}$ and $\left(\mathbf{w}^{\text{VDL}}_i\right)^{\star}$, $i = 1,\cdots,K$, given the selected active APs}. This can be done by solving (P4) with $\beta_n = 0$, $\forall n \in \mathcal{N_{\text{on}}}$ and $ \mathbf{w}^{\text{DL}}_{i,n} = \mathbf{w}^{\text{VDL}}_{i,n} = \mathbf{0}, i=1,...,K, \forall n \notin \mathcal{N}_{\text{on}}$.
\item \emph{Obtain the optimal transmit power values of MUs $\left(p^{\text{UL}}_i\right)^{\star}$, $i = 1,\cdots,K$}. This can be done by solving problem (\ref{eq:UL power minimization}) with $\mathbf{v}^{\text{UL}}_i = \left(\mathbf{w}^{\text{VDL}}_i\right)^{\star}$, $\forall i \in \mathcal{K}$, which is a simple linear programming (LP) problem.
\end{enumerate}
The iterative update given in (\ref{eq:iterative update}) is designed to make small entries in $\left\{t_n\right\}^{N}_{n=1}$ converge to zero. Furthermore, as the updating evolves, the penalty associated with AP $n$ in the objective function, i.e., $\beta_nt_n$, will converge to two possible values:
\begin{align}
\beta_nt_n & \rightarrow \left\{ \begin{array}{cl} \displaystyle
P_{c,n} & \mbox{if } t^{\star}_n > 0, \mbox{i.e., AP} ~ n ~\mbox{is active} \\
0 & \mbox{otherwise. }
\end{array}\right.
\end{align}
In other words, only the active APs will incur penalties being the exact same values as their static power consumption, which has the same effect as the indicator function in problem (P1-1) or (P1). {\color{red}Convergence of this algorithm can be shown by identifying the iterative update as a Majorization-Minimization (MM) algorithm \cite{MM} for a concave minimization problem, i.e., using $\log(\cdot)$ function, which is concave, to approximate the indicator function given in (\ref{eq:on off condition}). The details are thus omitted due to space limitations.}

\subsubsection{Per-AP and Per-MU Power Constraints}\label{sec:with}
It is first observed that the per-AP power constraints in (\ref{eq:ULDL DL power constraint}), i.e.,
\begin{align}
\sum^{K}_{i=1}\|\mathbf{w}^{\text{DL}}_{i,n}\|^2 \leq P^{\text{DL}}_{n,\text{max}}, ~~ n = 1,\cdots,N
\end{align}
are convex. Therefore, adding per-AP power constraints to problem (P1-1) does not need to alter the above algorithm. Thus, we focus on the per-MU power constraints in the UL transmission in this subsection.

Again, we consider the following transmit sum-power minimization problem in the UL with per-MU power constraints:
\begin{align}\label{eq:UL power minimization with per MU}
\mathop{\mathtt{Min.}}\limits_{\{\mathbf{v}^{\text{UL}}_i\},\{p^{\text{UL}}_i\}} &
~~ \sum_{i=1}^{K}p^{\text{UL}}_i \nonumber \\
\mathtt{s.t.}
& ~~ \text{SINR}^{\text{UL}}_i \geq \gamma^{\text{UL}}_i, \forall i \in \mathcal{K} \nonumber \\
& ~~ 0 \leq p^{\text{UL}}_i \leq P^{\text{UL}}_{i,\text{max}}, \forall i \in \mathcal{K}.
\end{align}
{\color{red}Although it has been shown in \cite{Yu07} that the sum-power minimization problem in the DL with per-AP power constraints can be transformed into an equivalent min-max optimization problem in the UL, we are not able to find an equivalent DL problem for problem (\ref{eq:UL power minimization with per MU}) as in Section \ref{sec:without} which is able to handle the per-MU power constraints. The fundamental reason is that the power allocation obtained by solving problem (\ref{eq:UL power minimization}) is already component-wise minimum, which can be shown by the uniqueness of the fixed-point solution for a set of minimum SINR requirements in the UL given randomly generated channels \cite{Wiesel06}. The component-wise minimum power allocation indicates that it is not possible to further reduce one particular MU's power consumption by increasing others', i.e., there is no tradeoff among different MUs in terms of power minimization. Consequently, solving problem (\ref{eq:UL power minimization with per MU}) requires only one additional step compared with solving problem (\ref{eq:UL power minimization}), i.e., checking whether the optimal power solution to problem (\ref{eq:UL power minimization}) satisfies the per-MU power constraints. If this is the case, the solution is also optimal for problem (\ref{eq:UL power minimization with per MU}); otherwise, problem (\ref{eq:UL power minimization with per MU}) is infeasible.}

Next, we present our complete algorithm for problem (P1) with the per-AP and per-MU power constraints. Compared to the algorithm proposed for problem (P1-1) in Section \ref{sec:without} without the per-AP and per-MU power constraints, the new algorithm differs in the first step, i.e., to identify the subset of active APs. The main idea is that a set of candidate active APs is first obtained by ignoring the per-MU power constraints but with a new sum-power constraint in the UL (or equivalently its virtual DL), i.e., we iteratively solve the following problem similarly as in the first step of solving problem (P1-1) in Section \ref{sec:without}.
\begin{align}
(\text{P5}):\mathop{\mathtt{Min.}}\limits_{\{\mathbf{w}^{\text{DL}}_i\},\{\mathbf{w}^{\text{VDL}}_i\},\{t_n\}} &
~~ \sum^{N}_{n=1}\beta_nt_n + \sum_{i=1}^{K}\|\mathbf{w}^{\text{DL}}_i\|^2 + \lambda\sum_{i=1}^{K}\|\mathbf{w}^{\text{VDL}}_i\|^2\\
\mathtt{s.t.}
& ~~ (\ref{eq:DL-SOCP SINR constraint}), (\ref{eq:VDL-SOCP SINR constraint}) ~ \mbox{and} ~(\ref{eq:slack}) \\
& ~~ \sum^{K}_{i=1}\|\mathbf{w}^{\text{DL}}_{i,n}\|^2 \leq P^{\text{DL}}_{n,\text{max}}, \forall n \in \mathcal{N} \label{eq:per AP p4}\\
& ~~ \sum^{K}_{i=1}\|\mathbf{w}^{\text{VDL}}_i\|^2 \leq \sum^{K}_{i=1}P^{\text{UL}}_{i,\text{max}} \label{eq:sum user p4}.
\end{align}
The sum-power constraint in (\ref{eq:sum user p4}) is added to impose a mild control on the transmit powers of all MUs in the UL. After obtaining the candidate set, the feasibility of the UL transmission is then verified. If the candidate set can support the UL transmission with the given per-MU power constraints, then the optimal solution of (P1) is obtained; otherwise, one or more APs need to be active for the UL transmission.

To be more specific, denote the set of candidate active APs obtained by iteratively solving problem (P5) as $\mathcal{\tilde{N}}_{\text{on}}$. Problem (\ref{eq:UL power minimization}) is then solved with $\mathcal{\tilde{N}}_{\text{on}}$, for which the feasibility is guaranteed due to the virtual DL SINR constraints in (\ref{eq:VDL-SOCP SINR constraint}). We denote the obtained power allocation as $\tilde{p}^{\text{UL}}_i$, $i = 1,\cdots, K$.
\begin{itemize}
\item If $\mathcal{\tilde{N}}_{\text{on}}$ can support the UL transmission without violating any MU's power constraint, i.e.,
\begin{align}
\tilde{p}^{\text{UL}}_i \leq P^{\text{UL}}_{i,\text{max}}, \forall i \in \mathcal{K}
\end{align}
the candidate set can be finalized as the set of active APs and the algorithm proceeds to find the optimal transmit/receive beamforming vectors similarly as that in Section \ref{sec:without}.

\item If $\mathcal{\tilde{N}}_{\text{on}}$ cannot support the UL transmission with the given MU's power constraints, we propose the following price based iterative method to determine the additional active APs. Specifically, in each iteration, for those APs that are not in the candidate set each will be assigned a price $\theta_m, m \notin \mathcal{\tilde{N}}_{\text{on}}$, which is defined as
\begin{align}\label{eq:price}
\theta_m = \frac{1}{P_{c,m}}\sum_{i \in \mathcal{B}}\frac{\tilde{p}^{\text{UL}}_i-P^{\text{UL}}_{i,\text{max}}}{P^{\text{UL}}_{i,\text{max}}}\|\mathbf{g}_{i,m}\|^2, ~ \forall m \notin \mathcal{\tilde{N}}_{\text{on}}
\end{align}
where $\mathcal{B} \triangleq \left\{i | \tilde{p}^{\text{UL}}_i > P^{\text{UL}}_{i,\text{max}}, i \in \mathcal{K}\right\}$. The price $\theta_m$ is set to be the normalized (by its corresponding static power consumption) weighted-sum power gains of the channels from AP $m$ to all the MUs that have their power constraints being violated. The weights are chosen as the ratios of MUs' required additional powers to their individual power limits. According to the definition of $\theta_m$ in (\ref{eq:price}), the AP having smaller static power consumption and better channels to MUs whose power constraints are more severely violated will be associated with a larger price. The candidate set is then updated by including the AP that corresponds the largest $\theta_m$ as
\begin{align}
\mathcal{\tilde{N}}_{\text{on}} \leftarrow \mathcal{\tilde{N}}_{\text{on}} \cup \left(\arg\max\limits_{m\notin \mathcal{\tilde{N}}_{\text{on}}}{\theta_m}\right).
\end{align}
With updated $\mathcal{\tilde{N}}_{\text{on}}$, the feasibility of the UL transmission needs to be re-checked by obtaining a new set of power allocation, which will be used to compute the new $\theta_m$'s in next iteration if further updating is required. The above process is repeated until all the MUs' power constraints are satisfied. Its convergence is guaranteed since problem (P1) has been assumed to be feasible if all APs are active.
\end{itemize}
Combining with the algorithm in Section \ref{sec:without}, our complete algorithm for problem (P1) based on GSO is summarized in Table \ref{table1}. {\color{red}For the algorithm given in Table \ref{table1}, there are two problems that need to be iteratively solved, i.e., problems (\ref{eq:UL power minimization}) and (P5). Since problem (\ref{eq:UL power minimization}) can be efficiently solved by the fixed-point algorithm \cite{Wiesel06}, the computation time is dominated by solving the SOCP problem (P5). If the primal-dual interior point algorithm \cite{Boydbook} is used by the numerical solver for solving (P5), the computational complexity is of order $M^{3.5}K^{3.5}$. Furthermore, since the convergence of the iterative update in steps 4)-5), governed by the MM algorithm, is very fast (approximately $10$-$15$ iterations) as observed in the simulations, the overall complexity of the algorithm in Table \ref{table1} is approximately $\mathcal{O}(M^{3.5}K^{3.5})$.}

\begin{table}[H]
\begin{center}
\caption{\textbf{Algorithm \ref{table1}}: Proposed algorithm for Problem (P1) based on GSO} \vspace{0.2cm}
 \hrule
\vspace{0.3cm}
\begin{enumerate}
\item Set $l=0$, initialize the set of candidate active APs as $\mathcal{\tilde{N}}_{\text{on}} = \mathcal{N}$.
\item Obtain $\tilde{\mathbf{w}}^{\text{DL}}_{i,n}$'s and $\tilde{\mathbf{w}}^{\text{VDL}}_{i,n}$'s by solving problem (P5) with $\beta_n = 0, \forall n \in \mathcal{N}$.
\item Set $t^{(0)}_n = \sqrt{\sum^{K}_{i=1}\|\tilde{\mathbf{w}}^{\text{DL}}_{i,n}\|^2+\|\tilde{\mathbf{w}}^{\text{VDL}}_{i,n}\|^2}, n = 1,\cdots,N$.
\item {\bf Repeat:}
    \begin{itemize}
    \item[ a)] $l \leftarrow l+1$.
    \item[ b)] Set $\beta^{(l)}_n = \frac{P_{c,n}}{t^{(l-1)}_n+\varepsilon}$, $\forall n \in \mathcal{N}$.
    \item[ c)] Obtain $\mathbf{t}^{(l)} = [t^{(l)}_1,\cdots,t^{(l)}_N]$ by solving problem (P5) with $\beta_n = \beta^{(l)}_n, \forall n \in \mathcal{N}$.
    \end{itemize}
\item {\bf Until} $|\beta^{(l)}_n - \beta^{(l-1)}_n| \leq \eta$, $\forall n \in \mathcal{N}$ or $l = l_{\text{max}}$.
\item Set $\mathcal{\tilde{N}}_{\text{on}}$ as $\mathcal{\tilde{N}}_{\text{on}}= \left\{n|t^{\star}_n>0, n \in \mathcal{N}\right\}$.
\item {\bf Repeat:}
    \begin{itemize}
        \item[ a)] Obtain $\tilde{p}^{\text{UL}}_i$, $i = 1,\cdots,K$, by solving problem (\ref{eq:UL power minimization}) with $\mathcal{\tilde{N}}_{\text{on}}$.
        \item[ b)] Set $\mathcal{B} = \left\{i | \tilde{p}^{\text{UL}}_i > P^{\text{UL}}_{i,\text{max}}, i \in \mathcal{K}\right\}$.
        \item[ c)] Set $\theta_m = \frac{1}{P_{c,m}}\sum_{i \in \mathcal{B}}\frac{\tilde{p}^{\text{UL}}_i-P^{\text{UL}}_{i,\text{max}}}{P^{\text{UL}}_{i,\text{max}}}\|\mathbf{g}_{i,m}\|^2, \forall m \notin \mathcal{\tilde{N}}_{\text{on}}$.
        \item[ d)] Set $\mathcal{\tilde{N}}_{\text{on}} \leftarrow \mathcal{\tilde{N}}_{\text{on}} \cup \left(\arg\max\limits_{m\notin \mathcal{\tilde{N}}_{\text{on}}}{\theta_m}\right)$.
    \end{itemize}
\item {\bf Until} $\mathcal{B} = \emptyset$.
\item Obtain $\left(\mathbf{w}^{\text{DL}}_i\right)^{\star}$ and $\left(\mathbf{w}^{\text{V-DL}}_i\right)^{\star}$, $i = 1,\cdots,K$, by solving (P5) with $\beta_n = 0$, $\forall n \in \mathcal{N}$ and $ \mathbf{w}^{\text{DL}}_{i,n} = \mathbf{w}^{\text{VDL}}_{i,n} = \mathbf{0}, i=1,...,K, \forall n \notin \mathcal{\tilde{N}}_{\text{on}}$.
\item Set $\left(\mathbf{v}^{\text{UL}}_i\right)^{\star} = \left(\mathbf{w}^{\text{VDL}}_i\right)^{\star}, \forall i \in \mathcal{K}$, and compute the $\left(p^{\text{UL}}_i\right)^{\star}$, $i = 1,\cdots,K$, by solving problem (\ref{eq:UL power minimization}).
\end{enumerate}
\vspace{0.2cm} \hrule \label{table1} \end{center}
\end{table}

\subsection{Proposed Algorithm for (P1) based on RIP}
In this subsection, an alternative algorithm for problem (P1) is developed based on RIP by applying the same idea of establishing a virtual DL transmission for the original UL. Similar to the case with GSO, the per-MU power constraints are first replaced with a sum-power constraint in the UL. The resulting problem is further reformulated as a convex SOCP by relaxing the binary variables $\{\rho_{n}\}$, which is given as follows.
\begin{align}
(\text{P6}):\mathop{\mathtt{Min.}}\limits_{\{\mathbf{w}^{\text{DL}}_i\},\{\mathbf{w}^{\text{VDL}}_i\},\{\rho_n\}} &
~~  \sum^{N}_{n=1}\rho_nP_{c,n} + \sum_{i=1}^{K}\|\mathbf{w}^{\text{DL}}_i\|^2 + \lambda\sum_{i=1}^{K}\|\mathbf{w}^{\text{VDL}}_i\|^2 \nonumber \\
\mathtt{s.t.}
& ~~ (\ref{eq:DL-SOCP SINR constraint}), (\ref{eq:VDL-SOCP SINR constraint}) ~ \text{and} ~ (\ref{eq:sum user p4}) \nonumber \\
& ~~ \sum^{K}_{i=1}\|\mathbf{w}^{\text{DL}}_{i,n}\|^2 \leq \rho_nP^{\text{DL}}_{n,\text{max}}, \forall n \in \mathcal{N} \label{eq:bigm1}\\
& ~~ \sum^{K}_{i=1}\|\mathbf{w}^{\text{VDL}}_{i,n}\|^2 \leq \rho_n\sum^{K}_{i=1}P^{\text{UL}}_{i,\text{max}}, \forall n \in \mathcal{N} \label{eq:bigm2}\\
& ~~ \sum^{N}_{n = 1}\rho_n \geq 1 \label{eq:red}\\
& ~~ 0 \leq \rho_n \leq 1, \forall n \in \mathcal{N}. \label{eq:bigml}
\end{align}
Note that instead of implementing the active-sleep constraints jointly for the actual DL and virtual DL as (\ref{eq:bigM onoff}) in problem (P2), i.e., $\sum^{K}_{i=1}\|\mathbf{w}^{\text{DL}}_{i,n}\|^2+\|\mathbf{w}^{\text{VDL}}_{i,n}\|^2 \leq \rho_n(P^{\text{DL}}_{n,\text{max}}+\sum^{K}_{i=1}P^{\text{UL}}_{i,\text{max}})$, we divide them into two sets of coupled active-sleep constraints as in (\ref{eq:bigm1}) and (\ref{eq:bigm2}) via $\rho_n$'s. For the non-relaxed problem of (P6) with binary $\rho_n$'s, i.e. $\rho_n \in \{0, 1\}, \forall n \in \mathcal{N}$, it can be shown that these two formulations are equivalent. However, for the case of the relaxed problem (P6) with continuous valued $\rho_n$'s, the separated active-sleep constraints are designed to avoid the situation that the difference between $P^{\text{DL}}_{n,\text{max}}$ and $\sum^{K}_{i=1}P^{\text{UL}}_{i,\text{max}}$ is too large such that the optimal value of $\rho_n$ is dominated by either DL or UL transmission. To implement the big-$M$ method with active-sleep constraints \cite{Integer}, an appropriate upper bound for the term $\sum^{K}_{i=1}\|\mathbf{w}^{\text{VDL}}_{i,n}\|^2$ needs to be found. According to the UL-DL duality, the minimum sum-power achieved is the same for the UL and its virtual DL transmissions. Therefore, $\sum^{K}_{i=1}P^{\text{UL}}_{i,\text{max}}$ can be chosen as the upper bound of $\sum^{K}_{i=1}\|\mathbf{w}^{\text{VDL}}_{i,n}\|^2, \forall n \in \mathcal{N}$. Finally, it is evident that the optimal value of problem (P6) serves as a lower bound of its non-relaxed problem with binary $\rho_n$'s. In order to further tighten this lower bound, one way is to reduce the feasible set of design variables. Constraint in (\ref{eq:red}) is introduced specifically to achieve this end, which can be shown to be redundant for the non-relaxed problem of (P6).

We adopt the same idea of incentive measure based AP selection as in \cite{Cheng13} to design a polynomial-time algorithm for problem (P2). However, it remains to find an incentive measure that reflects the importance of each AP to both DL and UL transmissions based on problem (P6). It is interesting to observe that after transforming the UL related terms to their virtual DL counterparts, the optimal relaxed binary variable solution of problem (P6) becomes a good choice to serve this purpose. Let $\breve{\rho}_n$, $\breve{\mathbf{w}}^{\text{DL}}_i$ and $\breve{\mathbf{w}}^{\text{VDL}}_i$ denote the optimal solution to problem (P6). Intuitively, the AP that has larger static power consumption and worse channels to all MUs is more desired to be switched into sleep mode from the perspective of energy saving. In (P6), for AP $n$ having larger $P_{c,n}$, $\breve{\rho}_n$ is desired to be smaller in order to achieve the minimum value of the objective function. Furthermore, it is practically valid that for DL power minimization problem, the optimal transmit power of APs that have worse channels to MUs is in general smaller. As a result, solving problem (P6) yields smaller $\sum^{K}_{i=1}\|\breve{\mathbf{w}}^{\text{DL}}_{i,n}\|^2$ and $\sum^{K}_{i=1}\|\breve{\mathbf{w}}^{\text{VDL}}_{i,n}\|^2$ and thus smaller $\breve{\rho}_n$ for AP $n$ that has worse channels to MUs for both the DL and UL transmissions. To summarize, the AP that corresponds to smaller $\breve{\rho}_n$ is more desired to be switched into sleep mode.

An iterative process is then designed to determine the set of active APs based on $\breve{\rho}_n$'s that are taken as incentive measures. The process starts with assuming all APs are active. In each iteration, problem (P6) is solved with a candidate set of active APs, and the AP corresponding to the smallest $\breve{\rho}_n$ will be removed from the candidate set. This process is repeated until one of the following conditions occurs:
\begin{itemize}
\item The weighted sum-power cannot be further reduced;
\item Problem (P6) becomes infeasible;
\item Problem (\ref{eq:UL power minimization with per MU}) becomes infeasible.
\end{itemize}
Note that the feasibility checking for problem (\ref{eq:UL power minimization with per MU}) is the same as that in Algorithm I, which ensures the per-MU power constraints. An overall algorithm for problem (P1) based on RIP is summarized in Table \ref{table2}. {\color{red}For the algorithm given in Table \ref{table2}, the computation time is dominated by solving the SOCP problem (P6). If the primal-dual interior point algorithm \cite{Boydbook} is used by the numerical solver for solving (P6), the computational complexity is of order $M^{3.5}K^{3.5}$. Furthermore, since the worst case complexity for the iteration in steps 2)-3) is $\mathcal{O}(N)$, the overall complexity of the algorithm in Table \ref{table2} is $\mathcal{O}(NM^{3.5}K^{3.5})$.}

\begin{table}[H]
\begin{center}
\caption{\textbf{Algorithm \ref{table2}}: Proposed algorithm for Problem (P1) based on RIP} \vspace{0.2cm}
 \hrule
\vspace{0.3cm}
\begin{enumerate}
\item Set $l=0$, $\Phi^{(0)}$ a sufficiently large value, and initialize the set of candidate active APs as $\mathcal{\tilde{N}}_{\text{on}} = \mathcal{N}$.
\item {\bf Repeat:}
    \begin{itemize}
    \item[ a)] $l \leftarrow l+1$.
    \item[ b)] Solve problem (P6) with $\rho_n = 0$, $\forall n \notin \mathcal{\tilde{N}}_{\text{on}}$.
    \item[ c)] Set $\mathcal{\tilde{N}}_{\text{on}} \leftarrow \mathcal{\tilde{N}}_{\text{on}} \setminus \left(\arg\min\limits_{n\in \mathcal{\tilde{N}}_{\text{on}}}{\breve{\rho}^{(l)}_n}\right)$.
    \item[ d)] Set $\Phi^{(l)}$ as the optimal value of problem (P6) with $\rho_n = 1, \forall n \in \mathcal{\tilde{N}}_{\text{on}}$ and $\rho_n = 0, \forall n \notin \mathcal{\tilde{N}}_{\text{on}}$.
    \end{itemize}
\item {\bf Until} $\Phi^{(l)} > \Phi^{(l-1)}$ or problem (P6) is infeasible or problem (\ref{eq:UL power minimization with per MU}) is infeasible.
\item Obtain $\left(\mathbf{w}^{\text{DL}}_i\right)^{\star}$ and $\left(\mathbf{w}^{\text{V-DL}}_i\right)^{\star}$, $i = 1,\cdots,K$, by solving problem (P6) with $\rho_n = 1, \forall n \in \mathcal{\tilde{N}}_{\text{on}}$ and $\rho_n = 0, \forall n \notin \mathcal{\tilde{N}}_{\text{on}}$.
\item Set $\left(\mathbf{v}^{\text{UL}}_i\right)^{\star} = \left(\mathbf{w}^{\text{VDL}}_i\right)^{\star}, \forall i \in \mathcal{K}$, and compute the $\left(p^{\text{UL}}_i\right)^{\star}$, $i = 1,\cdots,K$, by solving problem (\ref{eq:UL power minimization}).
\end{enumerate}
\vspace{0.2cm} \hrule \label{table2} \end{center}
\end{table}

\section{Numerical Results}\label{sec:numerical}
In this section, we present numerical results to verify our proposed algorithms from three perspectives: ensuring feasibility for both DL and UL transmissions; achieving network power saving with optimal MU-AP association; and adjusting minimum power consumption tradeoffs between active APs and MUs. We consider two possible C-RAN configurations:
\begin{enumerate}
\item Homogeneous setup: all APs are assumed to have the same power consumption model with $P_{c,n} = 2$W and $P^{\text{DL}}_{n,\text{max}} = 1$W, $\forall n \in \mathcal{K}$, if not specified otherwise.

\item Heterogeneous setup: two types of APs are assumed, namely, high-power AP (HAP) and low-power AP (LAP), where the static power consumption for HAP and LAP are set as $50$W and $2$W, respectively, and the transmit power budgets for HAP and LAP are set as $20$W and $1$W, respectively.
\end{enumerate}
We assume that each AP $n$, $n \in \mathcal{N}$, is equipped with $M_{n} = 2$ antennas. For the single-antenna MU, we set the transmit power limit as $P^{\text{UL}}_{i,\text{max}} = 0.5$W, $\forall i \in \mathcal{K}$. For simplicity, we assume that the SINR requirements of all MUs are the same in the UL or DL. All the APs (except HAPs under the heterogeneous setup) and MUs are assumed to be uniformly and independently distributed in a square area with the size of $3$Km$\times$$3$Km. For all the simulations under heterogeneous setup, it is assumed that there are $2$ HAPs with fixed location at $[-750\text{m}, 0\text{m}]$ and $[750\text{m}, 0\text{m}]$, respectively. We assume a simplified channel model consisting of the distance-dependent attenuation with pathloss exponent $\alpha = 3$ and a multiplicative random factor (exponentially distributed with unit mean) accounting for short-term Rayleigh fading. We also set $\lambda = 1$ if not specified otherwise, i.e., we consider the sum-power consumption of all active APs and MUs. Finally, we set the receiver noise power for all the APs and MUs as $\sigma^2 = -50$dBm.

\subsection{Feasibility Performance}
First, we demonstrate the importance of active AP selection by jointly considering both DL and UL transmission in terms of the SINR feasibility in C-RAN. Since feasibility is our focus here instead of power consumption, it is assumed that the selected active APs will support all MUs for both the DL and UL transmissions. The simulation results compare our proposed algorithms (i.e., Algorithms I and II) with the following three AP selection schemes:
\begin{itemize}
\item {\bf AP initiated reference signal strength (APIRSS) based selection}: In this scheme, APs first broadcast orthogonal reference signals. Then, for each MU, the AP corresponding to the largest received reference signal strength will be included in the set of active APs. Note that this scheme has been implemented in practical cellular systems \cite{cellassociation}.
\item {\bf MU initiated reference signal strength (MUIRSS) based selection}: In this scheme, MUs first broadcast orthogonal reference signals. Then, for each MU, the AP corresponding to the largest received reference signal strength will be included in the set of active APs. Note that since all MUs are assumed to transmit reference signals with equal power and pathloss in general dominates short-term fading, the AP that is closest to each MU will receive strongest reference signal in general. Also note that in the previous APIRSS based scheme, if all APs are assumed to transmit with equal reference signal power (e.g., for the homogenous setup), the selected active APs will be very likely to be the same as those by the MUIRSS based scheme.
\item {\bf Proposed algorithm without considering UL (PAw/oUL)}: In this algorithm, the set of active APs are chosen from the conventional DL perspective by modifying our proposed algorithms. Specifically, Algorithm I is used here and similar results can be obtained with Algorithm II. Note that Algorithm I without considering UL transmission is similar to that proposed in \cite{Letaief13}.
\end{itemize}
With the obtained set of active APs, the feasibility check of problem (P1) can be decoupled into two independent feasibility problems: one for the DL and the other for the UL, while the network feasibility is achieved only when both the UL and DL SINR feasibility of all MUs are guaranteed.

In Fig. \ref{fig:FeasibilityDemo}, we illustrate the set of active APs generated by different schemes under the heterogeneous setup, and also compare them with that by the optimal exhaustive search. It is assumed that there are $2$ HAPs and $8$ LAPs jointly supporting $8$ MUs. The SINR targets for both DL and UL transmissions of all MUs are set as $8$dB. First, it is observed that Algorithm I and Algorithm II obtain the same set of active APs as shown in Fig. \ref{fig:FeasibilityDemoHetNetL12}, which is also identical to that found by exhaustive search. Second, it is observed that the $2$ HAPs are both chosen to be active in Fig. \ref{fig:FeasibilityDemoHetNetHighSignal} for the APIRSS based scheme. This is due to the significant difference between HAP and LAP in terms of transmit power, which makes most MUs receive the strongest DL reference signal from the HAP. The above phenomenon is commonly found in heterogenous network (HetNet) \cite{DenseHetnet} with different types of BSs (e.g. macro/micro/pico BSs). Third, from Fig. \ref{fig:FeasibilityDemoHetNetClose}, the active APs by the MUIRSS based scheme are simply those closer to the MUs, which is as expected. Finally, in Fig. \ref{fig:FeasibilityDemoHetNetDL}, only two LAPs are chosen to support all MUs with the PAw/oUL algorithm. This is because the algorithm does not consider UL transmission, and as a result Fig. \ref{fig:FeasibilityDemoHetNetDL} only shows the most energy-efficient AP selection for DL transmission.

\begin{figure}
\centering
\subfigure[]{
\includegraphics[width=0.48\textwidth]{./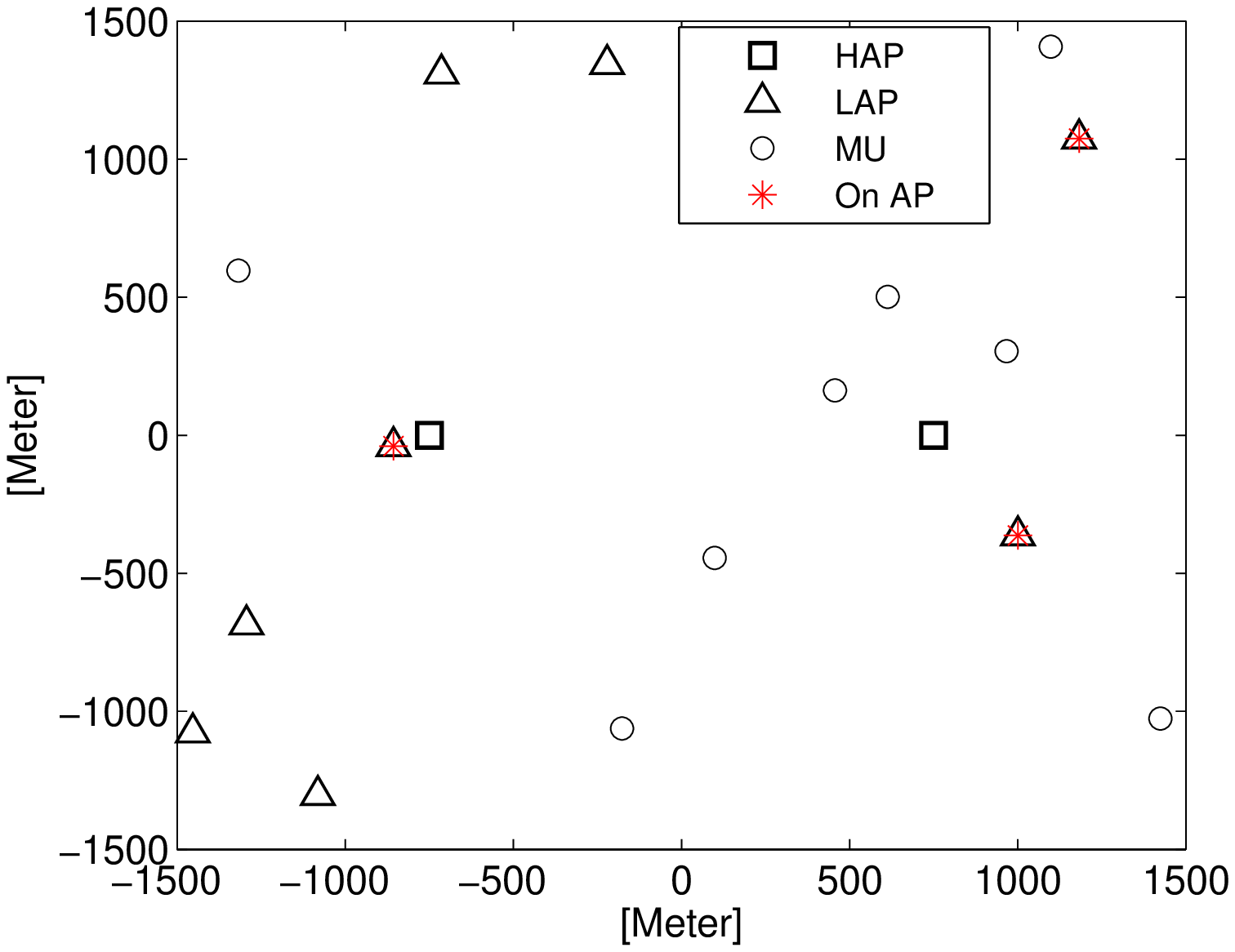}
\label{fig:FeasibilityDemoHetNetL12}
}
\subfigure[]{
\includegraphics[width=0.48\textwidth]{./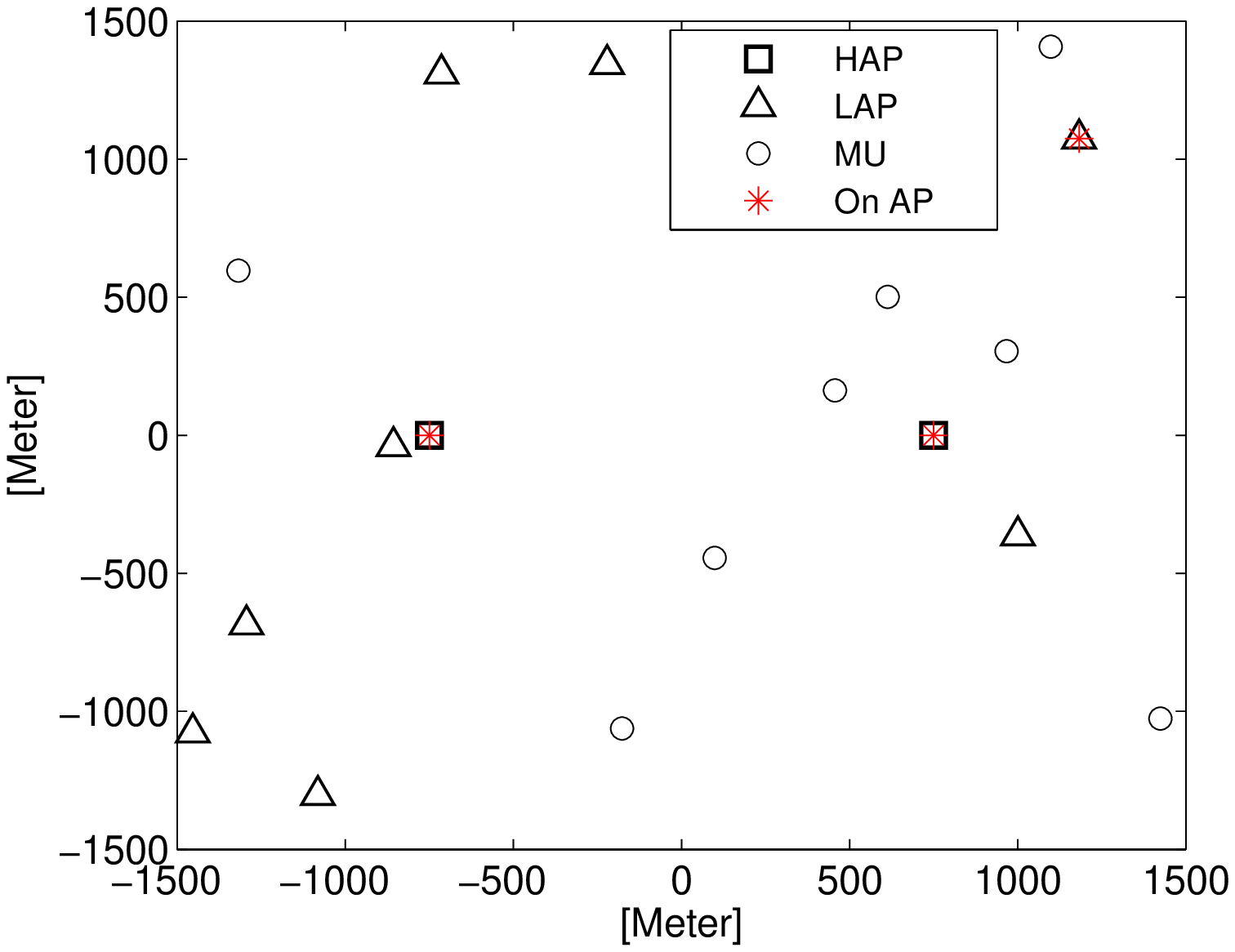}
\label{fig:FeasibilityDemoHetNetHighSignal}
}
\subfigure[]{
\includegraphics[width=0.48\textwidth]{./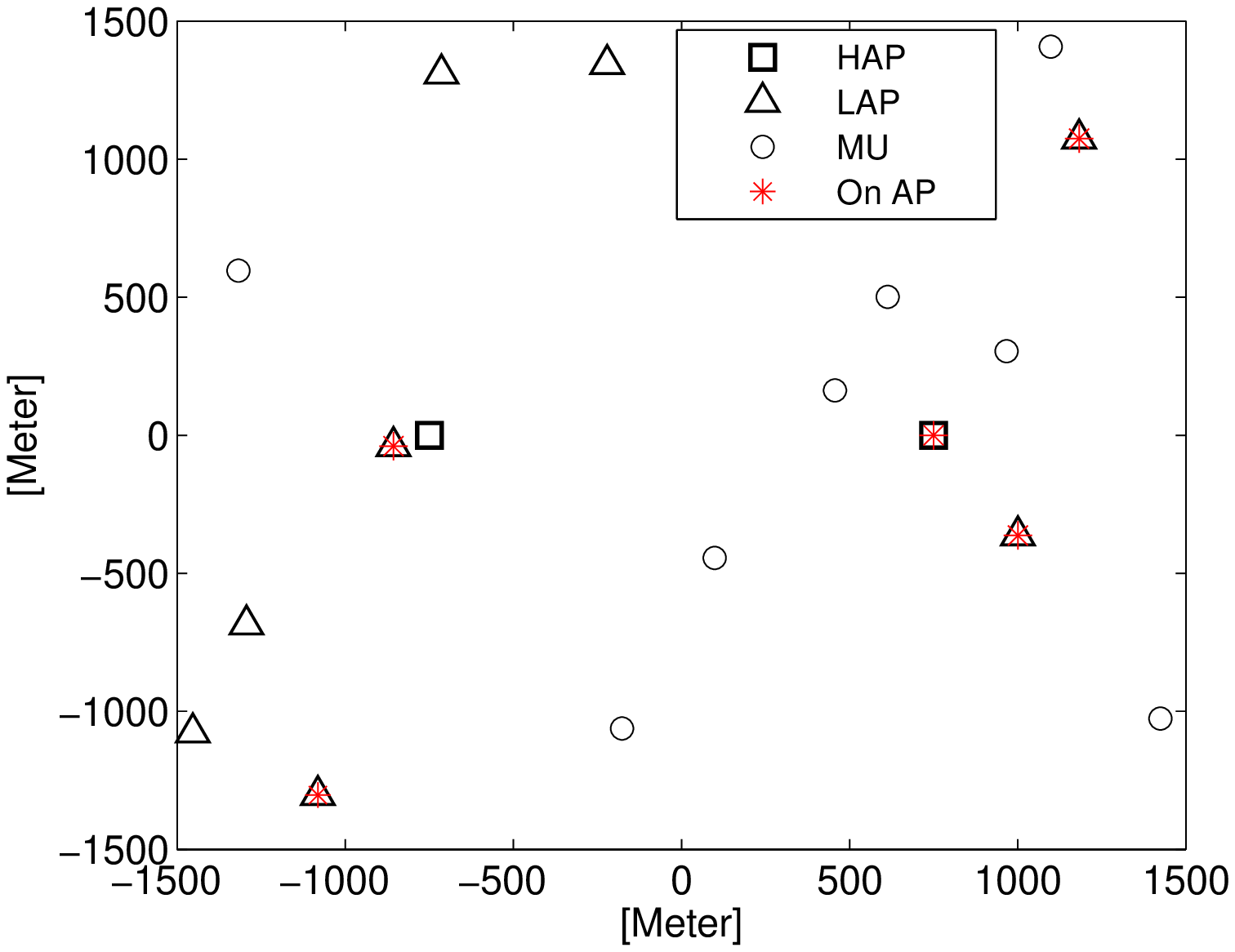}
\label{fig:FeasibilityDemoHetNetClose}
}
\subfigure[]{
\includegraphics[width=0.48\textwidth]{./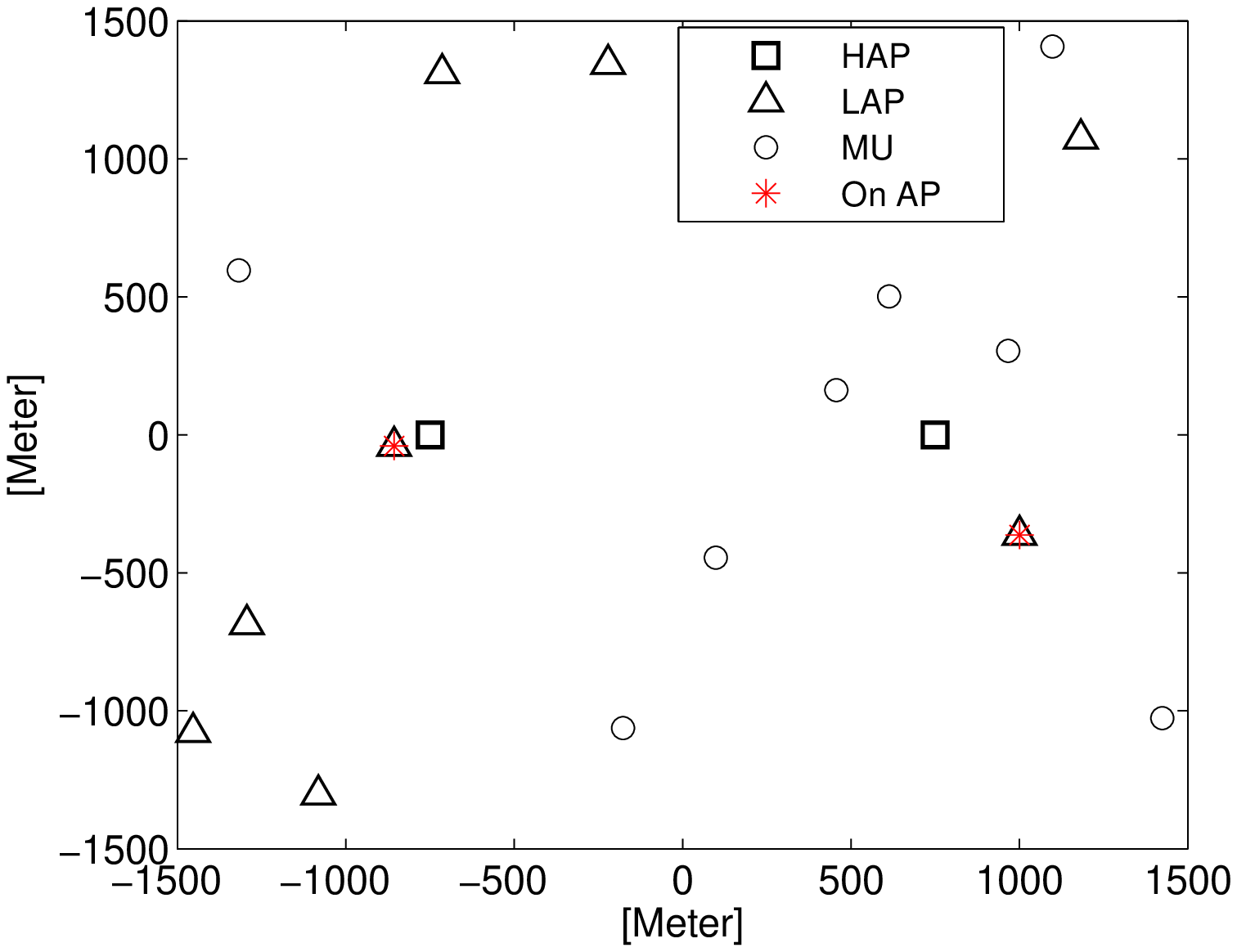}
\label{fig:FeasibilityDemoHetNetDL}
}
\caption[]{The set of active APs generated by: (a) Proposed algorithms; (b) APIRSS; (c) MUIRSS; and (d) PAw/oUL.}
\label{fig:FeasibilityDemo}
\end{figure}

To compare the feasibility performance, we run the above algorithms with different DL and UL SINR targets. It is assumed that $N = 6$ and $K = 4$. The results are summarized in Table \ref{table4} and Table \ref{table3}, where the number of infeasible cases for each scheme is shown over {\color{red}$200$ randomly generated network and channel realizations}, for homogeneous setup and heterogeneous setup, respectively. Note that in these examples, Algorithm I and Algorithm II have identical feasibility performance, since the system is infeasible only when the DL/UL SINR requirements cannot be supported for given channels and power budgets even with all APs being active.

From both Table \ref{table4} and Table \ref{table3}, it is first observed that the three comparison schemes, i.e., APIRSS based scheme, MUIRSS based scheme and PAw/oUL, all incur much larger number of infeasible cases as compared to our proposed algorithms. It is also observed that among the three comparison schemes, PAw/oUL has the best performance (or the minimum number of infeasible cases) when the DL transmission is dominant (i.e., $\gamma^{\text{DL}}_i > \gamma^{\text{UL}}_i$); however, it performs the worst in the opposite situation (i.e., $\gamma^{\text{DL}}_i < \gamma^{\text{UL}}_i$). This observation indicates that DL oriented scheme could result in infeasible transmit power of MUs in the UL for the cases with stringent UL requirements. From the last two rows of Table \ref{table3}, it is observed that the APIRSS based scheme performs worse than the MUIRSS based scheme when the UL SINR target is high. This is because that under heterogeneous setup, as shown in Fig. \ref{fig:FeasibilityDemoHetNetHighSignal}, MUs are attached to the HAPs under APIRSS based scheme although the HAPs may be actually more distant away from MUs compared with the distributed LAPs. This imbalanced association causes much higher transmit powers of MUs or even their infeasible transmit power in the UL. There has been effort in the literature to address this traffic imbalance problem in HetNet. For example in \cite{Andrews13}, the reference signal from picocell BS is multiplied by a factor with magnitude being larger than one, which makes it appear more appealing for MU association than the heavily-loaded macrocell BS.

\begin{table}[ht]
\caption{Feasibility Performance Comparison under homogeneous setup}
\begin{center}
\begin{tabular}{|c|c|c|c|c|c|c|}
  \hline
  \multicolumn{3}{|c|}{Parameters}  &  \multicolumn{4}{|c|}{Number of Infeasible Cases}\\ \hline
  $K$ & $\gamma^{\text{DL}}_i$(dB) & $\gamma^{\text{UL}}_i$(dB) & APIRSS & MUIRSS & PAw/oUL & Proposed Algorithms\\ \hline
  $2$ &$6$ & $6$  & $0$ & $0$ & $0$ & $0$  \\ \hline
  $2$ & $12$ & $6$  & $14$ & $18$ & $2$ & $0$ \\ \hline
  $2$ & $6$ & $12$  & $86$ & $90$ & $150$ & $54$ \\ \hline
  $2$ & $12$ & $12$  & $88$ & $92$ & $118$ & $50$ \\ \hline
  $4$ & $6$ & $6$  & $0$ & $0$ & $4$ & $0$ \\ \hline
  $4$ & $12$ & $6$  & $52$ & $56$ & $2$ & $0$ \\ \hline
  $4$ & $6$ & $12$  & $140$ & $144$ & $194$ & $80$ \\ \hline
  $4$ & $12$ & $12$  & $152$ & $150$ & $164$ & $86$ \\
  \hline
\end{tabular}
\end{center}\label{table4}
\end{table}

\begin{table}[ht]
\caption{Feasibility Performance Comparison under Heterogeneous setup}
\begin{center}
\begin{tabular}{|c|c|c|c|c|c|c|}
  \hline
  \multicolumn{3}{|c|}{Parameters}  &  \multicolumn{4}{|c|}{Infeasibility}\\ \hline
  $K$ & $\gamma^{\text{DL}}_i$(dB) & $\gamma^{\text{UL}}_i$(dB) & APIRSS & MUIRSS & PAw/oUL & Proposed Algorithms\\ \hline
  $2$ &$6$ & $6$  & $0$ & $0$ & $0$ & $0$  \\ \hline
  $2$ & $12$ & $6$ & $12$ & $18$ & $2$ & $0$\\ \hline
  $2$ & $6$ & $12$ & $124$ & $82$ & $154$ & $46$\\ \hline
  $2$ & $12$ & $12$ & $124$ & $86$ & $118$ & $52$\\ \hline
  $4$ & $6$ & $6$ & $0$ & $0$ & $2$ & $0$\\ \hline
  $4$ & $12$ & $6$ & $42$ & $36$ & $6$ & $0$\\ \hline
  $4$ & $6$ & $12$ & $178$ & $132$ & $196$ & $84$\\ \hline
  $4$ & $12$ & $12$ & $178$ & $134$ & $170$ & $82$\\
  \hline
\end{tabular}
\end{center}\label{table3}
\end{table}

\subsection{Sum-Power Minimization}\label{sec:power performance}
Next, we compare the performance of the proposed algorithms in terms of sum-power minimization in C-RAN with the following benchmark schemes:
\begin{itemize}
\item {\bf Exhaustive search (ES)}: In this scheme, the optimal set of active APs are found by exhaustive search, which serves as the performance upper bound (or lower bound on the sum-power consumption) for other considered schemes. With any set of active APs, the minimum-power DL and UL beamforming problems can be separately solved. Since the complexity of ES grows exponentially with $N$, it can only be implemented for C-RAN with small number of APs.

\item {\bf Joint processing (JP) among all APs}\cite{Weiyu10}: In this scheme, all the APs are assumed to be active and only the total transmit power consumption is minimized by solving two separate (DL and UL) minimum-power beamforming design problems.

\item {\bf Algorithm I with $\ell_{1,\infty}$ norm penalty}: This algorithm is the same as that given in Table \ref{table1} except that the sparsity enforcing penalty is replaced with $\ell_{1,\infty}$ norm as given in (\ref{eq:infinity}).
\end{itemize}

In our simulations, we consider the homogeneous C-RAN setup with $N = 6$ and plot the performance by averaging over {\color{red}$500$} randomly generated network and channel realizations. The SINR requirements are set as $\gamma^{\text{UL}}_i = \gamma^{\text{DL}}_i = 8$dB for all MUs $i\in \mathcal{K}$. Fig. \ref{fig:PerformanceMU} and Fig. \ref{fig:PerformanceComparePcDis} show the sum-power consumption achieved by different algorithms versus the number of MUs $K$ and AP static power consumption $P_{c,n}$ (assumed to be identical for all APs), respectively. From both figures, it is observed that the proposed algorithms have similar performance as the optimal ES and achieve significant power saving compared with JP. It is also observed that the penalty term based on either $\ell_{1,2}$ or $\ell_{1,\infty}$ norm has small impact on the performance of Algorithm I. Finally, Algorithm I always outperforms Algorithm II although the performance gap is not significant.

\begin{figure}
\centering
\epsfxsize=0.7\linewidth
\includegraphics[width=12cm]{./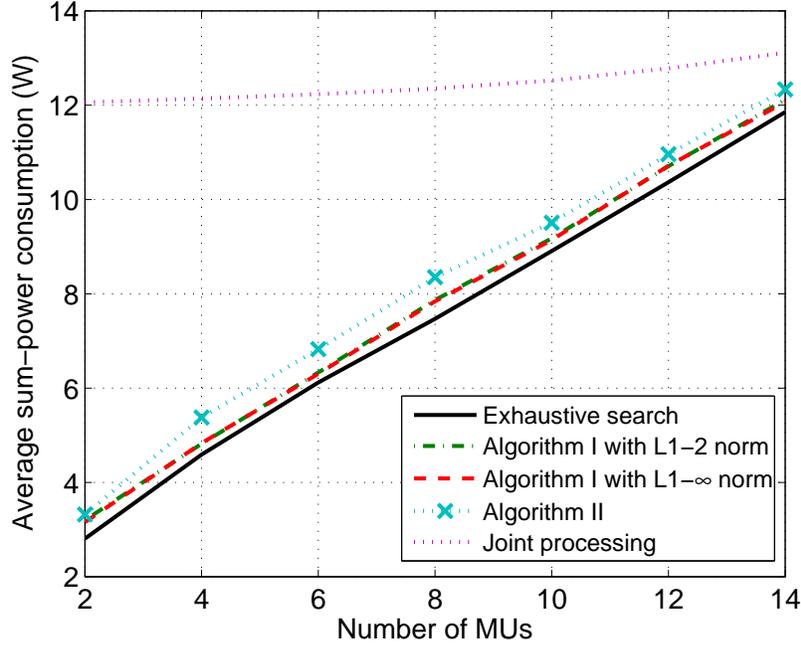}
\caption{Sum-power consumption versus number of MUs under homogeneous setup with $P_{c,n} = 2$W, $\forall n \in \mathcal{N}$.}
\label{fig:PerformanceMU}
\end{figure}

\begin{figure}
\centering
\epsfxsize=0.7\linewidth
    \includegraphics[width=12cm]{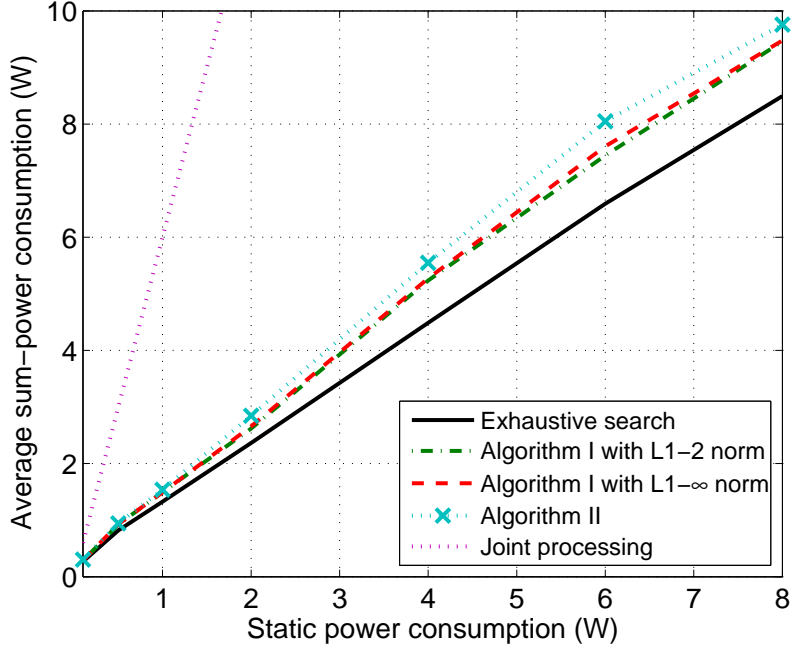}
\caption{Sum-power consumption versus AP static power consumption under homogeneous setup with $K = 4$.} \label{fig:PerformanceComparePcDis}
\end{figure}

\subsection{Power Consumption Tradeoff}
Finally, we compare the sum-power consumption tradeoffs between active APs and all MUs for the proposed algorithms as well as the optimal ES, by varying the weight parameter $\lambda$ in our formulated problems. We consider a homogenous C-RAN setup with $N = 6$ and $K = 4$, where $\gamma^{\text{UL}}_i = \gamma^{\text{DL}}_i = 8$dB for all MUs $i\in \mathcal{K}$. Since it has been shown in the pervious subsection that Algorithm I with $\ell_{1,2}$ norm and $\ell_{1,\infty}$ norm achieves similar performance, we choose $\ell_{1,2}$ norm in this simulation. Furthermore, since JS assumes that all the APs are active, which decouples DL and UL transmissions and thus has no sum-power consumption tradeoffs between APs and MUs, it is also not included. From Fig. \ref{fig:Tradeoff}, it is first observed that for all considered algorithms, as $\lambda$ increases, the sum-power consumption of active APs increases and that of all MUs decreases, which is as expected. It is also observed that Algorithm I achieves trade-off performance closer to ES and outperforms Algorithm II, which is in accordance with the results in Figs. \ref{fig:PerformanceMU} and \ref{fig:PerformanceComparePcDis}.

\begin{figure}
\centering
\epsfxsize=0.7\linewidth
\includegraphics[width=12cm]{./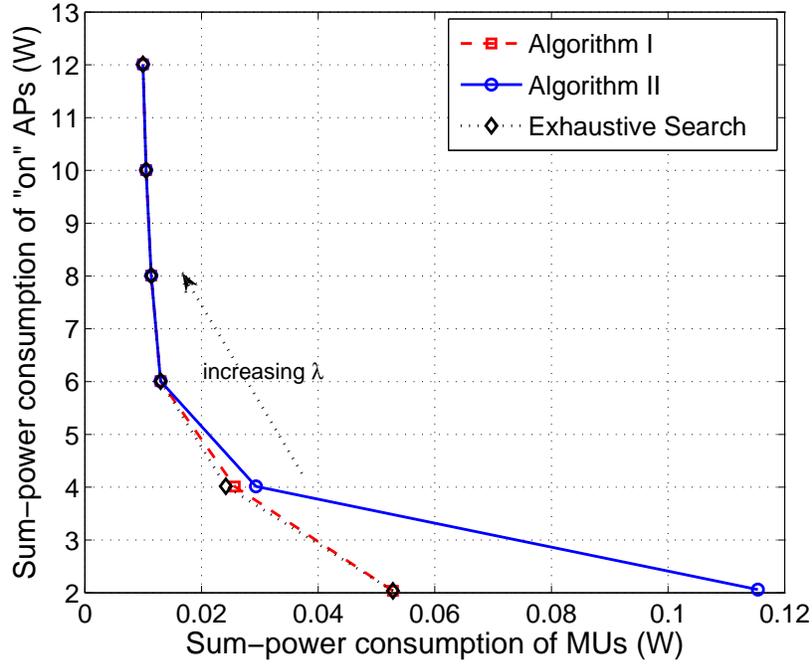}
\caption{Sum-power consumption tradeoffs between active APs and MUs under homogeneous setup.}
\label{fig:Tradeoff}
\end{figure}

\section{Conclusion}\label{sec:conclusion}
In this paper, we consider C-RAN with densely deployed APs cooperatively serving distributed MUs for both the DL and UL transmissions. We study the problem of joint DL and UL MU-AP association and beamforming design to optimize the energy consumption tradeoffs between the active APs and MUs.
Leveraging on the celebrated UL-DL duality result, we show that by establishing a virtual DL transmission for the original UL transmission, the joint DL and UL problem can be converted to an equivalent DL problem in C-RAN with two inter-related subproblems for the original and virtual DL transmissions, respectively. Based on this transformation, two efficient algorithms for joint DL and UL MU-AP association and beamforming design are proposed based on GSO and RIP techniques, respectively. By extensive simulations, it is shown that our proposed algorithms improve the network reliability/feasibility, energy efficiency, as well as power consumption tradeoffs between APs and MUs, as compared to other existing methods in the literature.

\end{document}